\newcommand{\cF}{{\mathcal{F}}}
\newcommand{\cT}{{\mathcal{T}}}
\newcommand{\RR}{\mathbb{R}}
\newcommand{\sign}{\mathrm{sign}}
\newcommand{\supp}{{\mathrm{supp}}} 
\newcommand{\bc}{\begin{center}}
\newcommand{\ec}{\end{center}}
\newcommand{\bdm}{\begin{displaymath}}
\newcommand{\edm}{\end{displaymath}}
\newcommand{\beq}{\begin{equation}}
\newcommand{\eeq}{\end{equation}}
\newcommand{\bfl}{\begin{flushleft}}
\newcommand{\efl}{\end{flushleft}}
\newcommand{\bt}{\begin{tabbing}}
\newcommand{\et}{\end{tabbing}}
\newcommand{\beqn}{\begin{eqnarray}}
\newcommand{\eeqn}{\end{eqnarray}}
\newcommand{\beqs}{\begin{align*}} 
\newcommand{\eeqs}{\end{align*}}  
\DeclareMathOperator*{\Min}{minimize}
\DeclareMathOperator*{\Max}{maximize}
\newcommand{\st}{\mbox{subject to}}
\newtheorem{condition}{Condition}
\newtheorem{assumption}{Assumption}
\begin{document}

\title{One condition for solution uniqueness and robustness of both l1-synthesis and l1-analysis minimizations \thanks{Communicated by xxx}}

\titlerunning{Solution uniqueness/robustness of $\ell_1$-analysis minimization}        

\author{Hui Zhang \and Ming Yan \and Wotao Yin}


\institute{Hui Zhang \at
              Department of Mathematics and Systems Science, College of Science,\\
              (also affiliated with the state key laboratory for high performance computation,)\\
              National University of Defense Technology,\\
              Changsha, Hunan, China, 410073.\\
              \email{hhuuii.zhang@gmail.com}
           \and
              Ming Yan and Wotao Yin \at
              Department of Mathematics,\\
              University of California,\\
              Los Angeles, CA 90095, USA.\\
              \email{yanm@math.ucla.edu (M. Yan) and wotaoyin@math.ucla.edu (W. Yin)}            
}

\date{Received:  / Accepted: }

\maketitle

\begin{abstract}
The $\ell_1$-synthesis model and the $\ell_1$-analysis model  recover structured signals from their undersampled measurements. The solution of  former  is  a sparse sum of dictionary atoms, and that of the latter   makes sparse correlations with  dictionary atoms. This paper addresses the question: when  can we trust these models to recover specific signals? We answer the question with a condition that is both necessary and sufficient to guarantee the recovery to be unique and exact and,  in presence of measurement noise,  to be robust. The condition is one--for--all in the sense that it applies to both of the $\ell_1$-synthesis and $\ell_1$-analysis  models,   to both of their constrained and unconstrained formulations, and to both the exact recovery and robust recovery cases. Furthermore,  a convex infinity--norm program is introduced for numerically verifying the condition. A comprehensive comparison with related existing conditions are included.
\keywords{exact recovery \and robust recovery \and $\ell_1$-analysis \and $\ell_1$-synthesis \and sparse optimization \and compressive sensing}
\end{abstract}

\section{Introduction}
With both synthesis and analysis models of sparse signal recovery \cite{EMR}, one is interested in when the recovery is successful, namely, whether the solution is unique and whether the solution error is proportional to the amount of noise in the measurements. 

Various \emph{sufficient} conditions have been proposed to guarantee successful recovery. Let us first discuss two types of recovery conditions. The \emph{uniform recovery} conditions guarantee not only the successful recovery of one sparse signal but all  signals that are sufficiently sparse, irrelevant of  the locations of their nonzero entries. The \emph{non-uniform recovery conditions}, however, focus on the recovery  of a restricted set of sufficiently sparse signals, for example, the sparse signals with a specific support. Since non-uniform recovery conditions cover fewer  signals, they are in general weaker and thus easier to hold  than  uniform recovery conditions. In addition, some non-uniform recovery conditions, especially those for fixed signal support, are easier to numerically verify, whereas all the existing uniform recovery conditions are  numerically intractable to verify on given sensing matrices. On the other hand, several kinds of random matrices (such as  those with i.i.d. subgaussian entries) satisfy the uniform recovery conditions with high probability. The uniform conditions are  useful in designing randomized linear  measurements, whereas non-uniform conditions are useful on deterministic linear  measurements on restricted sets of signals. 

Non-uniform conditions for $\ell_1$-synthesis minimization include the non-uniform dual certificate condition \cite{Fuch1} and the ``RIPless" property \cite{CP11}. Well-known examples of uniform conditions include the restricted isometry principle \cite{CT}, the null space condition \cite{Co}, the spherical section property \cite{Zh}, and others.
 
Because $\ell_1$-analysis minimization takes a more general form than $\ell_1$-synthesis minimization, some of the above non-uniform recovery conditions have been extended to the analysis case; recent works~\cite{Gras2,VPDF,Halt,NW1,NW2,KW} have  made significant contributions.

This paper studies the so-called \emph{dual certificate}  condition, a type of non-uniform condition. We show that if a  signal is a solution to any model among \eqref{eq:bp}--\eqref{eq:qpb} described below,  there exists a  necessary and sufficient condition,  same for any of the three models, that guarantees that the signal will be uniquely recovered. While this result  has been partially known in previous work for $\ell_1$ minimization as a sufficient condition, we establish three new results:
\begin{itemize}
\item the condition previously known to be sufficient is in fact necessity;
\item the condition guarantees robustness to noise. That is, under this condition,  if  the observed data is contaminated by arbitrary noise,  the solution to either \eqref{eq:qpa} or \eqref{eq:qpb} is robust to the noise in the sense the solution error is proportional to the Euclidean norm of the noise;
\item  nearly the same condition but imposed on the support of the  largest $|I|$ entries of an approximately sparse signal guarantees its robust recovery by model ~\eqref{eq:qpb}. 
\end{itemize}
These results complete the theory of non-uniform recovery of the $\ell_1$-synthesis and $\ell_1$-analysis models. 

The proposed condition is compared to existing non-uniform conditions in the literature, most of which are stronger than ours and are thus sufficient yet not necessary. Note that some of those stronger ones also give additional properties that ours does not. Technically, a part of our analysis is inspired by  existing results in \cite{BO,Gras2,Gras1,ZYC,rauhut2014,foucart2013}, which bear certain similarity among themselves and will be mentioned in later sections.

The rest of the paper is organized as follows. Sections~\ref{sec:02} and \ref{sec:2} formulate the problem and state the main results. Section~\ref{sec:3} reviews several related results. Section~\ref{sec:4} discusses  condition verification. Proofs for the main results are given in sections~\ref{sec:5}, ~\ref{sec:6}, and~\ref{sec:8}.

\section{Problem Formulation and Contributions}
\label{sec:02}

\subsection{Notation}
We equip $\mathbb{R}^n$ with the canonical scalar product $\langle \cdot, \cdot \rangle$ and Euclidean norm $\|\cdot\|_2$. We let $|\cdot|$ return the cardinality if the input is a set or the absolute value if the input is a number. For any $x\in \mathbb{R}^n$,  $\textrm{supp}(x)=\{k: 1\leq k\leq n,x_k\neq0\}$ is the index set of the non-zero entries of $x$. $\textrm{sign}(x)$ is the vector whose $i$th entry is the sign of $x_i$, taking a value among $+1$, $-1$, and $0$. For any $p\geq 1$, the $\ell_p$-norm of $x\in \mathbb{R}^n$ is
$$\|x\|_p=\left(\sum_{i=1}^n|x_i|^p\right)^{1/p},$$
its $\ell_0$-``norm'' is $\|x\|_0=|\textrm{supp}(x)|$, and its $\ell_\infty$-norm is $\|x\|_\infty=\max\{|x_i|: i=1,\cdots,n\}$. For $x\in \mathbb{R}^n$ and  $I\subset \{1,2,\cdots, n\}$, $x_I$ denotes the vector formed by the entries $x_i$ of $x$ for $i\in I$, and $I^c$ is the complement of $I$. Similarly,  $A_I$ is the submatrix formed by the columns of $A$ indexed by $I$.  $A^T$ is the transpose of $A$. We use $A^T_I$ for the transpose of submatrix $A_I$, not a submatrix of $A^T$. For square matrix $A$,  $\lambda_{\max}(A)$ and $\lambda_{\min}(A)$ denote its largest and smallest eigenvalues, respectively, $\textrm{Cond}(A)$ denotes its condition number, and $\|A\|$ denotes its spectral norm. The null and column spaces of  $A$ are denoted by $\textrm{Ker} (A)$ and $\textrm{Im} (A)$, respectively.

\subsection{Problem formulations}
Let $x^*\in\RR^n$ be a  signal of interest. This paper  studies when  $\ell_1$ minimization can uniquely and robustly  recover $x^*$ from its linear measurements
\begin{equation*}
b=\Phi x^*+w,
\end{equation*}
where $\Phi\in\RR^{m\times n}$ is a certain matrix and  $w\in\RR^m$ is noise. We focus on the  setting $m\leq n$.

The results of this paper cover the following  $\ell_1$ minimization formulations:
\begin{subequations} \label{mod1}
\begin{align}
\label{eq:bp}
& \Min_x\|\Psi^Tx\|_1,\quad\st~\Phi x=b,\\
\label{eq:qpa}
&\Min_x \|\Phi x-b\|_2^2+\lambda \|\Psi^Tx\|_1,\\
\label{eq:qpb}
&\Min_x \|\Psi^Tx\|_1,\quad\st~~\|\Phi x-b\|_2\leq \delta,
\end{align}
\end{subequations}
where $\delta, \lambda$ are positive parameters. Model \eqref{eq:bp} is used if no noise is present, i.e., $w=0$. If $\Psi=Id$, the identify matrix, models in \eqref{mod1} are referred  to as the $\ell_1$ (or more generally, $\ell_1$-synthesis) models. If $\Psi\not=Id$, they are referred to as the $\ell_1$-analysis models, which are recently reviewed in \cite{EMR}.

In  $\ell_1$-synthesis models, the signal of interest is synthesized as $x^* = Dc$, where $D$ is a certain dictionary and $c$ is the sparse coefficients. The $\ell_1$-analysis model, including the cosparse analysis model \cite{NDEG} and the total variation model \cite{ROF} as widely known examples, has recently attracted a lot of attention.  The underlying signal is expected to make sparse correlations with  the columns (atoms) in an (overcomplete) dictionary $\Psi$, i.e., $\Psi^Tx^*$ is sparse; see
\cite{CENR,NDEG,LML}.

\subsection{Geometry}
It is worth noting that the underlying geometry  of solution uniqueness is rather  clear. Indeed, it was characterized in terms of polytope faces by Donoho in \cite{donoho2005}, and also of null spaces and tangent cones by Chandrasekaran et al. in \cite{chand}. However, there lacks a concrete \emph{if-and-only-if} condition for one to check and the robustness results are difficult to obtain from  geometry.

Let us describe the geometry of uniqueness recovery under the simple setting: $\Psi=I$. Then, $x^*$ is the unique solution to \eqref{eq:bp} if the affine set $\cF=\{x:\Phi x=b\}$ touches the $\ell_1$ polytope $\{x:\|x\|_1=\gamma\}$, for some $\gamma>0$, at a \emph{unique} point on a face of the polytope. The uniqueness for \eqref{eq:qpa} and \eqref{eq:qpb} is similar except $\cF$ is replaced by a ``tube'' $\cT=\{x:\|\Phi x-b\|_2\le\sigma\}$. In both cases, the uniqueness is certified by  (i) a kernel condition that ensures touching  the face at just one point, and (ii) the existence of a hyperplane (dual certificate) that separates $\cF$ or $\cT$ from the polytope that creates a positive gap between them outside the face. The same geometry applies to $\ell_1$-analysis minimization but involves the more complicated polytope $\{x:\|\Psi^Tx\|_1=\gamma\}$; see \cite{rauhut2013,foucart2014}.

This work not only formalizes these geometrical explanations with a concrete \emph{if-and-only-if} condition, but also establishes robustness bounds under the same condition, which cannot be explain by the above geometry. Suppose that noise is present in the measurements; then,  the point where  $\cF$ or $\cT$  touches the polytope will move, and the point may or may not stay in the same face in the noise-free case. The above geometry cannot clearly explain how far the point will move due to noise.

\section{Main Condition and Results}
\label{sec:2}

\subsection{Main Condition}
\begin{condition}\label{cond1}
Given $\bar{x}\in \RR^n$, index sets $I=\mathrm{supp}(\Psi^T\bar{x})\subset \{1,\cdots, l\}$ and $J = I^c$ satisfy

(1)~$\mathrm{Ker}(\Psi^T_J) \bigcap \mathrm{Ker}(\Phi)=\{0\}$;

(2)~There exists $y\in \RR^l$ such that $\Psi y\in \mathrm{Im}(\Phi ^T)$,  $y_{I}=\mathrm{sign}(\Psi_I^T\bar{x})$, and $\|y_J\|_\infty<1.$
\end{condition}
It is not difficult to understand the condition as we shall explain below.

Condition \ref{cond1} part (1)  says that there does \emph{not} exist any nonzero $\Delta x$ satisfying both $\Psi^T_J \bar{x} = \Psi^T_J (\bar{x}+\Delta x)$ and $\Phi \bar{x} = \Phi(\bar{x}+\Delta x)$. Otherwise, there exists a nonempty interval $\mathcal{I}=[\bar{x}-\alpha \Delta x, \bar{x}+ \alpha \Delta x]$ for some sufficiently small  $\alpha>0$  so that  $\Phi x=\Phi \bar{x}$ and $\|\Psi^T x\|_1$ is constant for $x\in\mathcal{I}$; hence $\bar{x}$ cannot be the unique minimizer. This condition  implies that the role of $\ell_1$ minimization is ``limited'' to recovering the support set $I$; if an oracle gives us $I$, we must solely rely on $\Psi$, $\Phi$, and $b$ recover $\bar{x}$, since $\|\Psi^T x\|_1$ is locally linear near $\bar{x}$ over $\{x:\mathrm{supp}(\Psi^T{x})=I\}$ and thus lacks the ability to pick $\bar{x}$ out. In addition, part (1) of the condition also implies the existence of $\rho$ and $\tau$ such that
\begin{align}
\|\Psi_I^Tx\|_2\leq \rho \|\Psi_J^Tx\|_1+\tau\|\Phi x\|_2
\end{align}
for all $x\in\RR^n$, which will be used later in Theorem~\ref{thm3}.

Condition \ref{cond1} part (2) states the existence of a strictly-complementary \emph{dual certificate} $y$. To see this, let us check a part of the optimality conditions of (\ref{eq:bp}): $0\in\Psi \partial \|\cdot\|_1(\Psi^Tx)-\Phi^T\beta$, where vector $\beta$ is the Lagrangian multipliers; we can rewrite the condition as $0=\Psi y-\Phi^T\beta$ where $y\in \partial \|\cdot\|_1(\Psi^Tx)$, which translates to $y_{I}=\mathrm{sign}(\Psi_I^Tx)$ and $\|y_J\|_\infty\leq 1$. This $y$ certifies the optimality of $\bar{x}$. For solution uniqueness and/or robustness, we shall later show that the strict inequality $\|y_J\|_\infty< 1$ is necessary.

It is worth mentioning a variant of Condition \ref{cond1}  as follows.
\begin{condition}\label{cond1'} Given $\bar{x}\in\RR^n$, let $I=\mathrm{supp}(\Psi^T\bar{x})$. There exists a nonempty index set $J\subseteq I^c$ such that the index sets $I$, $J$ and $K={(I\bigcup J)^c}$ satisfy

(1)~$\mathrm{Ker}(\Psi^T_J) \bigcap \mathrm{Ker}(\Phi)=\{0\}$;

(2)~There exists $y\in \RR^l$  such that $\Psi y\in \mathrm{Im}(\Phi ^T)$, $y_{I}=\mathrm{sign}(\Psi_I^T\bar{x})$, $\|y_J\|_\infty<1$, and $\|y_K\|_\infty\leq 1$.
\end{condition}
In Condition \ref{cond1'}, a smaller $J$ relaxes part (2) but gives a larger $\mathrm{Ker}(\Psi^T_J)$ and thus tightens part (1). Although Condition \ref{cond1'} allows a more flexible $J$ than Condition \ref{cond1}, we shall show that they are equivalent.

The comparisons between Condition 
\ref{cond1} and those in the existing literature are given in Section \ref{sec:3} below.
\subsection{Main Results}
Depending on the specific models in \eqref{mod1}, we need the following assumptions:
\begin{assumption}\label{assumption1}
Matrix $\Phi$ has full row-rank.
\end{assumption}
\begin{assumption}\label{assumption2}
$\lambda_{\textrm{max}}(\Psi\Psi^T)=1.$
\end{assumption}
\begin{assumption}\label{assumption3}
Matrix $\Psi$ has full row-rank.
\end{assumption}

Assumptions \ref{assumption1} and \ref{assumption3} are standard to avoid redundancy. Assumption \ref{assumption2} is non-essential as we can  scale any general $\Psi$ by multiplying  ${1\over \sqrt{\lambda_{\textrm{max}}(\Psi\Psi^T)}}$.
Below we state our main results and delay their proofs to Sections~\ref{sec:5}--\ref{sec:8}.

\begin{theorem}[Uniqueness]\label{thm1}Under Assumption~\ref{assumption1}, let $\hat{x}$ be a solution to problem (\ref{eq:bp}) or (\ref{eq:qpa}), or under Assumptions~\ref{assumption1} and \ref{assumption3}, let  $\hat{x}$ be a solution to problem (\ref{eq:qpb}). The following are equivalent:

1) Solution $\hat{x}$ is unique;

2) Condition~\ref{cond1} holds for $\bar{x}=\hat{x}$;

3) Condition~\ref{cond1'} holds for $\bar{x}=\hat{x}$.
\end{theorem}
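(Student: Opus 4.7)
The plan is to close a ring of implications by proving $(2)\Rightarrow(1)$, $(3)\Rightarrow(1)$, and $(1)\Rightarrow(2)$ separately; the remaining leg $(2)\Rightarrow(3)$ is immediate, since Condition~\ref{cond1} is precisely the special case of Condition~\ref{cond1'} with $J = I^c$ and $K = \emptyset$.

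For the sufficiency direction, let $\tilde{x}$ be a second optimal point and set $h := \tilde x - \hat x$. I first reduce to $\Phi h = 0$: this is automatic for (\ref{eq:bp}); for (\ref{eq:qpa}), strict convexity of $\|\Phi x - b\|_2^2$ in $\Phi x$ forces $\Phi\tilde x = \Phi\hat x$ by a standard midpoint argument; for (\ref{eq:qpb}), KKT combined with Cauchy--Schwarz equality (when the Lagrange multiplier $\mu$ is positive) or triviality (when $\mu = 0$) delivers the same reduction. I then invoke the subgradient inequality for $\|\cdot\|_1$ at $\Psi^T\hat x$ with two choices of subgradient: first the certificate $y$ itself, which contributes zero via $\langle \Psi y, h\rangle = \langle \Phi^T\beta, h\rangle = 0$, and second a subgradient whose $J$-component equals $\mathrm{sign}(\Psi_J^T h)$. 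Subtracting the two resulting inequalities and using $\|\Psi^T\tilde x\|_1 = \|\Psi^T\hat x\|_1$ yields
\begin{equation*}
0 \;\ge\; (1 - \|y_J\|_\infty)\,\|\Psi_J^T h\|_1,
\end{equation*}
so the strict inequality $\|y_J\|_\infty < 1$ forces $\Psi_J^T h = 0$. Together with $\Phi h = 0$, part~(1) of the condition then forces $h = 0$. The argument for $(3)\Rightarrow(1)$ is essentially identical: the $K$-terms contribute a nonnegative slack $(1-\|y_K\|_\infty)\|\Psi_K^T h\|_1 \ge 0$ on the right-hand side, and only $\Psi_J^T h = 0$ is needed to apply the weakened kernel condition of Condition~\ref{cond1'}.

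For necessity $(1)\Rightarrow(2)$, part~(1) is proved by a direct perturbation: if some nonzero $h$ lay in $\mathrm{Ker}(\Psi_{I^c}^T)\cap\mathrm{Ker}(\Phi)$, then $\hat x \pm t h$ would remain feasible for small $|t|$, and since $\mathrm{sign}(\Psi_I^T\hat x)$ is preserved, the objective $\|\Psi^T(\hat x+th)\|_1$ would be affine in $t$ near zero, producing either a strictly smaller objective (contradicting optimality) or a second minimizer (contradicting uniqueness); for (\ref{eq:qpa}) and (\ref{eq:qpb}), the feasibility and noise terms are unaffected thanks to $\Phi h = 0$. Part~(2) is the subtle step. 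Uniqueness forces, for every $h \in \mathrm{Ker}(\Phi) \setminus \{0\}$, the one-sided directional derivative
\begin{equation*}
\langle \mathrm{sign}(\Psi_I^T\hat x),\,\Psi_I^T h\rangle + \|\Psi_{I^c}^T h\|_1 > 0,
\end{equation*}
since otherwise the local piecewise-linear structure of $\|\Psi^T\cdot\|_1$ along $h$ would yield a second minimizer. This strict positivity on a linear subspace is then converted, by linear-programming duality --- equivalently, a Farkas-type theorem of alternatives applied to the polyhedron of weak dual certificates $\{y : y_I = \mathrm{sign}(\Psi_I^T\hat x),\ \|y_{I^c}\|_\infty\le 1,\ \Psi y \in \mathrm{Im}(\Phi^T)\}$ --- into the existence of a $y$ in that polyhedron with $\|y_{I^c}\|_\infty$ strictly less than one.

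The main obstacle is precisely this strictness upgrade in the necessity direction: plain optimality gives only the soft bound $\|y_{I^c}\|_\infty \le 1$, and uniqueness must be used in an essential way (via the LP duality argument above) to obtain the strict inequality. Secondary technicalities are the analysis chain rule $\partial\|\Psi^T\cdot\|_1(\hat x) = \Psi\,\partial\|\cdot\|_1(\Psi^T\hat x)$, for which Assumption~\ref{assumption3} removes degeneracies, and an active/inactive case split of the $\|\Phi x - b\|_2 \le \delta$ constraint in model (\ref{eq:qpb}).
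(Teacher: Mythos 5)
Your proposal is correct and follows essentially the same route as the paper: sufficiency via the dual certificate $y$ and a second subgradient with $g_J=\mathrm{sign}(\Psi_J^Th)$ yielding $(1-\|y_J\|_\infty)\|\Psi_J^Th\|_1\le 0$, necessity of part (1) by the $\pm th$ perturbation, and the strictness upgrade for part (2) via the inequality $\langle \mathrm{sign}(\Psi_I^T\hat x),\Psi_I^Th\rangle<\|\Psi_{I^c}^Th\|_1$ on $\mathrm{Ker}(\Phi)$ combined with linear-programming strong duality for an $\ell_\infty$-minimization over weak certificates. The only (inessential) organizational difference is that the paper first proves the full equivalence for (\ref{eq:bp}) and then reduces (\ref{eq:qpa}) and (\ref{eq:qpb}) to an auxiliary equality-constrained problem via a constancy-of-$\Phi x$ lemma, whereas you run the three models in parallel using the same $\Phi h=0$ reduction inside each implication.
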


This theorem states that Conditions ~\ref{cond1} and \ref{cond1'} are equivalent, and they are necessary and sufficient for a solution $\hat{x}$ to problem (\ref{eq:bp}), or to problem (\ref{eq:qpa}), or to problem (\ref{eq:qpb}) to be unique.
To state our next result on robustness, we let
$$r(J):=\sup_{u\in \mathrm{Ker}(\Psi^T_J)\backslash \{0\}}\frac{\|u\|_2}{\|\Phi u\|_2}.$$
Part (1) of Condition \ref{cond1} ensures that $0<r(J)<+\infty$. If $\Psi = I$, then $u\in \mathrm{Ker}(\Psi^T_J)\backslash \{0\}$ is a sparse nonzero vector with maximal support $J^c$, so $r(J)$ is the inverse of the minimal singular value of the submatrix $\Phi_{J^c}$.
Below we  claim Condition \ref{cond1} ensures the robustness of problems~(\ref{eq:qpa}) and~(\ref{eq:qpb}) to arbitrary noise in $b$.

\begin{theorem}[Robustness to measurement noise]\label{thm2}
Under Assumptions~\ref{assumption1}-\ref{assumption3}, given an original signal $x^*\in \RR^n$, let $I=\mathrm{supp}(\Psi^Tx^*)$ and $J=I^c$. For arbitrary noise $w$, let $b=\Phi x^*+w$ and  $\delta=\|w\|_2$. If Condition \ref{cond1} is met for $\bar{x}=x^*$, then

1) For any $C_0>0$, there exists constant $C_1>0$ such that every minimizer $x_{\delta,\lambda}$ of problem~(\ref{eq:qpa}) using parameter $\lambda=C_0\delta$ satisfies $$\|\Psi^T(x_{\delta,\lambda}- x^*)\|_1\leq C_1\delta;$$

2) Every minimizer $x_{\delta}$ of problem~(\ref{eq:qpb}) satisfies $$\|\Psi^T(x_{\delta}- x^*)\|_1\leq C_2\delta.$$

The constraints $c_1$ and $C_2$ are given as follows. Define

$$\beta=(\Phi \Phi^T)^{-1}\Phi\Psi y,\quad
C_3=r(J)\sqrt{|I|}\quad\mbox{and}\quad C_4=\frac{1+\mathrm{Cond}(\Psi)\|\Phi\|C_3}{1-\|y_J\|_\infty},$$
with which we have
\begin{align*}
C_1&=2C_3+C_0\|\beta\|_2+\frac{(1+C_0\|\beta\|_2/2)^2C_4}{C_0}, \\
C_2&=2C_3+2C_4\|\beta\|_2.
\end{align*}
\end{theorem}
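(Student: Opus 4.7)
The strategy is to bound the reconstruction error $h := x - x^*$ (where $x$ stands for $x_{\delta,\lambda}$ or $x_\delta$) by splitting $\|\Psi^T h\|_1 = \|\Psi_I^T h\|_1 + \|\Psi_J^T h\|_1$ and controlling each piece with a different part of Condition~\ref{cond1}. The $J$-part will be handled by the dual certificate from part~(2), which yields a subgradient-type lower bound on $\|\Psi^T x\|_1$ near $x^*$, and the $I$-part by the kernel-intersection condition~(1), which produces a quantitative local inverse.

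Two ingredients are prepared. First, from $y_I = \mathrm{sign}(\Psi_I^T x^*)$, $\|y_J\|_\infty < 1$, $\Psi_J^T x^* = 0$, $\Psi y = \Phi^T\beta$, and the convexity of $|\cdot|$, I derive
\begin{equation*}
\|\Psi^T(x^*+h)\|_1 \;\geq\; \|\Psi^T x^*\|_1 + \langle \beta, \Phi h\rangle + (1-\|y_J\|_\infty)\,\|\Psi_J^T h\|_1 \qquad \forall\,h\in\RR^n,
\end{equation*}
by writing $\|\Psi_I^T(x^*+h)\|_1 \geq \|\Psi_I^T x^*\|_1 + \langle y_I, \Psi_I^T h\rangle$ (convexity), $\|\Psi_J^T h\|_1 \geq \langle y_J,\Psi_J^T h\rangle + (1-\|y_J\|_\infty)\|\Psi_J^T h\|_1$ (H\"older), and $\langle \Psi y, h\rangle = \langle\beta,\Phi h\rangle$. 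Second, I quantify the stability inequality mentioned after Condition~\ref{cond1}: for constants $\rho,\tau$ depending on $r(J)$, $\|\Phi\|$, and $\mathrm{Cond}(\Psi)$,
\begin{equation*}
\|\Psi_I^T x\|_1 \;\leq\; \sqrt{|I|}\bigl(\tau\,\|\Phi x\|_2 + \rho\,\|\Psi_J^T x\|_1\bigr) \qquad \forall\,x\in\RR^n,
\end{equation*}
obtained by orthogonally decomposing $x = u+v$ with $u\in\mathrm{Ker}(\Psi_J^T)$, using $\|u\|_2\le r(J)\|\Phi u\|_2$ from the definition of $r(J)$, bounding $\|v\|_2$ through the left-inverse $(\Psi\Psi^T)^{-1}\Psi$ (whose operator norm equals $\mathrm{Cond}(\Psi)$ under Assumption~\ref{assumption2}) in terms of $\|\Psi_J^T v\|_2 = \|\Psi_J^T x\|_2 \le \|\Psi_J^T x\|_1$, and reassembling via the triangle inequality together with $\|\Psi\|=1$.

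For problem~(\ref{eq:qpb}), $x^*$ is feasible because $\|\Phi x^* - b\|_2 = \|w\|_2 = \delta$, so optimality of $x_\delta$ yields $\|\Psi^T x_\delta\|_1 \le \|\Psi^T x^*\|_1$ and the triangle inequality gives $\|\Phi h\|_2 \le 2\delta$. Substituting into the subgradient lower bound produces $(1-\|y_J\|_\infty)\|\Psi_J^T h\|_1 \le 2\|\beta\|_2\delta$; the stability inequality then converts this, together with $\|\Phi h\|_2 \le 2\delta$, into a bound on $\|\Psi_I^T h\|_1$. Summing the two pieces and tracking constants returns $\|\Psi^T h\|_1 \le C_2\delta$.

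For problem~(\ref{eq:qpa}), optimality reads $\|\Phi x_{\delta,\lambda}-b\|_2^2 + \lambda\|\Psi^T x_{\delta,\lambda}\|_1 \le \delta^2 + \lambda\|\Psi^T x^*\|_1$. Inserting the subgradient lower bound and using $\Phi h = (\Phi x_{\delta,\lambda}-b)+w$ yields a scalar inequality that mixes the data-fit quadratic with the linear $\lambda\|\Psi_J^T h\|_1$ term; completing the square in the variable $\Phi x_{\delta,\lambda}-b$ produces
\begin{equation*}
\bigl\|\Phi x_{\delta,\lambda}-b+\tfrac{\lambda}{2}\beta\bigr\|_2^2 + \lambda(1-\|y_J\|_\infty)\,\|\Psi_J^T h\|_1 \;\leq\; \bigl(\delta+\tfrac{\lambda}{2}\|\beta\|_2\bigr)^2,
\end{equation*}
from which I simultaneously extract $\|\Phi h\|_2 \le 2\delta + \lambda\|\beta\|_2$ (by dropping the positive $\|\Psi_J^T h\|_1$ term and applying the triangle inequality with $\|w\|_2 = \delta$) and a bound on $\|\Psi_J^T h\|_1$ (by dropping the positive square term). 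Setting $\lambda = C_0\delta$ and applying the stability inequality closes the argument with the announced $C_1$. The main obstacle is precisely this simultaneous extraction: matching the announced form of $C_1$ from a single quadratic--linear inequality in which $C_0$ and $\|\beta\|_2$ appear nonlinearly requires careful bookkeeping, but no new idea beyond the two ingredients above.
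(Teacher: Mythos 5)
Your overall architecture is reasonable and most of the optimality manipulations are correct: the subgradient lower bound is valid, the completing-the-square step for \eqref{eq:qpa} reproduces exactly the Bregman-distance estimates the paper imports as Lemma~\ref{lem3}, and the resulting bounds $(1-\|y_J\|_\infty)\|\Psi_J^Th\|_1\le d_y(x,x^*)$ and $\|\Phi h\|_2\le 2\delta+\lambda\|\beta\|_2$ are sound. The gap is in your second ingredient. Writing $h=u+v$ with $u\in\mathrm{Ker}(\Psi_J^T)$ and $v\in\mathrm{Im}(\Psi_J)$, you claim to bound $\|v\|_2$ by $\mathrm{Cond}(\Psi)\,\|\Psi_J^Tv\|_2$ via the left-inverse $(\Psi\Psi^T)^{-1}\Psi$. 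But that left-inverse recovers $v$ from the \emph{full} vector $\Psi^Tv$, giving only $\|v\|_2\le\mathrm{Cond}(\Psi)\|\Psi^Tv\|_2$, and $\|\Psi^Tv\|_2$ contains $\|\Psi_I^Tv\|_2$ --- the very quantity you are trying to control, so the step is circular. To bound $\|v\|_2$ by $\|\Psi_J^Tv\|_2$ alone you would need the smallest nonzero singular value of $\Psi_J$, which can be arbitrarily small even when $\Psi$ is perfectly conditioned (take $\Psi=[e_1,\,e_2,\,\epsilon e_1]$ and $J=\{3\}$: for $v=e_1\in\mathrm{Im}(\Psi_J)$ one has $\|\Psi_J^Tv\|_2=\epsilon\|v\|_2$). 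Your stability inequality does hold with \emph{some} $\rho,\tau$, by the injectivity of $x\mapsto(\Psi_J^Tx,\Phi x)$ from part (1) of Condition~\ref{cond1}, as the paper itself notes; but not with constants expressible through $r(J)$, $\|\Phi\|$, $\mathrm{Cond}(\Psi)$ as you assert, and hence not with the announced $C_1,C_2$.

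The idea you are missing is the paper's strong-duality step (Part 2 of Lemma~\ref{lem2}): the dual certificate controls not merely $\|\Psi_J^Th\|_1$ but the full quantity $\min\{\|\Psi^T(x-u)\|_1:u\in\mathrm{Ker}(\Psi_J^T)\}\le(1-\|y_J\|_\infty)^{-1}d_y(x,x^*)$, i.e.\ including the $I$-components of $x-u$ at the optimal $u$. With that stronger bound one runs the triangle inequality through the minimizing $u$, where $\|x-u\|_2\le\mathrm{Cond}(\Psi)\|\Psi^T(x-u)\|_2$ is legitimate because the full $\Psi^T$ appears, and the announced $C_3$ and $C_4$ drop out. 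As written, your route instead mirrors the paper's proof of Theorem~\ref{thm3} in Section~\ref{sec:8} (split into $I$- and $J$-parts, certificate for $J$, stability inequality for $I$), which yields robustness with unspecified $\rho,\tau$ but not the explicit constants claimed in Theorem~\ref{thm2}.
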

\begin{remark}
From the results of Theorem \ref{thm2}, it is straightforward to derive  $\ell_1$ or $\ell_2$ bounds for $(x_{\delta,\lambda}- x^*)$ and $(x_{\delta}- x^*)$ under Assumption \ref{assumption2}.
\end{remark}
\begin{remark}
Since $C_0$ is free to choose, one can choose the optimal $C_0 = \sqrt{\frac{4C_4}{4\|\beta\|_2 +C_4 \|\beta\|_2^2}}$ and simplify $C_1$ to
$$C_1 = 2C_3 + C_4 \|\beta\|_2+\sqrt{C_4^2 \|\beta\|_2^2+4 C_4\|\beta\|_2}\leq 2C_3+2C_4\|\beta\|_2+2,$$
which becomes very similar to $C_2$. This reflects the equivalence between problems \eqref{eq:qpa} and \eqref{eq:qpb} in the sense that given $\lambda$, one can find $\delta$ so that they have the same solution, and vice versa.
\end{remark}
\begin{remark}
Both $C_1$ and $C_2$ are the sum of $2C_3$ and other terms. $2C_3$ alone bounds the error when $\Psi^T x_{\delta,\lambda}$ (or $\Psi^T x_{\delta}$) and $\Psi^T x^*$ have matching signs. Since $C_3$ does not depend on $y_J$, part (2) of Condition \ref{cond1} does not play any role, whereas part (1) plays the major role. When the signs of $\Psi^T x_{\delta,\lambda}$ (or $\Psi^T x_{\delta}$) and $\Psi^T x^*$  do \emph{not} match, the remaining terms in $C_1$ and $C_2$ are involved, and they are affected part (2) of Condition \ref{cond1}; in particular, $\|y_J\|_\infty <1$ plays a big role as $C_4$ is inversely proportional to $1-\|y\|_\infty$. Also, since there is no knowledge about the support of $\Psi^T x_{\delta,\lambda}$, which may or may not equal to that of $\Psi^T x^*$, $C_4$ inevitably depends the global properties of $\Psi$ and $\Phi$. In contrast, $C_3$ only depends on the restricted property of $\Phi$.
\end{remark}

When $\Psi^Tx^*$ is only approximately sparse, we need to choose $I$ as the support of the $|I|$ largest entries of $\Psi^Tx^*$ in Condition~\ref{cond1}. Then we have the following robustness result of approximately sparse signals for \eqref{eq:qpb}.
\begin{theorem}[Robustness for approximately sparse signals]\label{thm3}
Under Assumption \ref{assumption3}, given an original signal $x^*\in \RR^n$ with $|I|$ largest entries of $\Psi^Tx^*$ supported on $I$, let $J=I^c$. For arbitrary noise $w$, let $b=\Phi x^*+w$ and  $\delta=\|w\|_2$. If Condition \ref{cond1} is met on $I$, i.e., $\mathrm{Ker}(\Psi^T_J) \bigcap \mathrm{Ker}(\Phi)=\{0\}$, and there exist $y\in \RR^l$ and $\beta \in\RR^m$ such that $\Psi y = \Phi ^T\beta$, and
\begin{equation}\label{addcond2}
y_I=\mathrm{sign}(\Psi_I^Tx^*), ~~ \|y_J\|_\infty< 1.
\end{equation}
Then there exist $\rho$ and $\tau$ such that
\begin{align}\label{addcond1}
\|\Psi^T_I x\|_2\leq \rho\|\Psi^T_Jx\|_1+\tau \|\Phi x\|_2,
\end{align}
for all $x\in\RR^n$, and every minimizer
$x_{\delta}$ of $\|\Psi^Tx\|_1$ subjective to $\|\Phi x-b\|_2\leq \delta$ satisfies
\begin{equation}\label{addresult}
\|\Psi^T(x_{\delta}- x^*)\|_2\leq\frac{2(1+\rho)}{1-\|y_J\|_\infty}\|\Psi^T_Jx^*\|_1+\left(\frac{2(1+\rho)\|\beta\|_2}{1-\|y_J\|_\infty}+2\tau\right)\delta.
\end{equation}
\end{theorem}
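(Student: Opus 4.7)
The plan is to first derive the auxiliary inequality~\eqref{addcond1} from Condition~\ref{cond1} part~(1), and then to follow a Fuchs-style dual-certificate argument, adapted from the exactly sparse case to the approximately sparse one: compare $\|\Psi^T x_\delta\|_1$ against $\|\Psi^T x^*\|_1$ using feasibility of $x^*$, split the norm over $I$ and $J$, and convert the inner product $\langle y_I, \Psi_I^T h\rangle$ into one involving $\Phi h$ via the identity $\Psi y = \Phi^T \beta$.

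For~\eqref{addcond1}, I would consider the linear map $T:\RR^n \to \RR^{|J|}\times \RR^m$ defined by $Tx=(\Psi_J^T x,\, \Phi x)$. Part~(1) of Condition~\ref{cond1} says precisely that $\mathrm{Ker}(T)=\{0\}$, so $T$ is an injection on a finite-dimensional space and hence $\|x\|_2 \le \kappa(\|\Psi_J^T x\|_2 + \|\Phi x\|_2)$ for some constant $\kappa>0$. Using $\|\Psi_J^T x\|_2 \le \|\Psi_J^T x\|_1$ and $\|\Psi_I^T x\|_2 \le \|\Psi\|\,\|x\|_2$ then gives~\eqref{addcond1} with, e.g., $\rho=\tau=\kappa\|\Psi\|$.

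Set $h = x_\delta - x^*$. Since $\|\Phi x^* - b\|_2 = \|w\|_2 = \delta$, $x^*$ is feasible for~\eqref{eq:qpb}, so $\|\Psi^T x_\delta\|_1 \le \|\Psi^T x^*\|_1$. Splitting both sides by $I$ and $J$, applying the $\ell_1$ subgradient inequality on $I$ with subgradient $y_I$ (valid because $y_i=0$ whenever $(\Psi^T x^*)_i=0$, so $y_I\in\partial\|\cdot\|_1(\Psi_I^T x^*)$), and the reverse triangle inequality on $J$, I obtain
\begin{align*}
\|\Psi_I^T x^*\|_1 + \|\Psi_J^T x^*\|_1 \;\ge\; \|\Psi_I^T x^*\|_1 + \langle y_I, \Psi_I^T h\rangle + \|\Psi_J^T h\|_1 - \|\Psi_J^T x^*\|_1.
\end{align*}
Writing $\langle y_I, \Psi_I^T h\rangle = \langle y, \Psi^T h\rangle - \langle y_J, \Psi_J^T h\rangle = \langle \beta, \Phi h\rangle - \langle y_J, \Psi_J^T h\rangle$ and applying Hölder gives
\begin{align*}
(1-\|y_J\|_\infty)\|\Psi_J^T h\|_1 \;\le\; 2\|\Psi_J^T x^*\|_1 + \|\beta\|_2\,\|\Phi h\|_2,
\end{align*}
and feasibility of both $x_\delta$ and $x^*$ yields $\|\Phi h\|_2 \le 2\delta$.

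Finally, I would decompose $\|\Psi^T h\|_2 \le \|\Psi_I^T h\|_2 + \|\Psi_J^T h\|_2 \le \|\Psi_I^T h\|_2 + \|\Psi_J^T h\|_1$, bound $\|\Psi_I^T h\|_2$ via~\eqref{addcond1} and $\|\Phi h\|_2\le 2\delta$, and substitute the bound on $\|\Psi_J^T h\|_1$ just obtained; collecting coefficients of $\|\Psi_J^T x^*\|_1$ and $\delta$ produces~\eqref{addresult} verbatim. The main subtlety is establishing~\eqref{addcond1} with the correct form (making sure a bound on $\|\Psi_I^T x\|_2$, not merely on $\|x\|_2$, falls out), and making the subgradient step on $I$ legal even when $\Psi_I^T x^*$ has some zero coordinates (handled by interpreting $\mathrm{sign}$ entrywise as above). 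Everything else is routine bookkeeping.
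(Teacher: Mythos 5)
Your proposal is correct and follows essentially the same route as the paper's proof: feasibility of $x^*$, splitting $\|\Psi^T x_\delta\|_1$ over $I$ and $J$, using $\sign(\Psi_I^T x^*)=y_I$ together with $\Psi y=\Phi^T\beta$ and H\"older to bound $\|\Psi_J^T h\|_1$, then invoking \eqref{addcond1} for the $I$-block and combining. The one place you go beyond the paper is the explicit compactness/injectivity argument for the existence of $\rho$ and $\tau$ in \eqref{addcond1}, which the paper only asserts as ``easy to show'' in a remark; your derivation of it is sound.
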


\begin{remark}With Assumption \ref{assumption3}, part (1) of Condition~\ref{cond1} is equivalent to the existence of $\rho$ and $\tau$ in~\eqref{addcond1} for all $x\in\RR^n$ with $\rho$ and $\tau$. When part (1) of Condition~\ref{cond1} is satisfied, the existence of $\rho$ and $\tau$ is easy to show. If part (1) is not satisfied, then there exist $x\neq 0$ such that $\Psi^T_Jx=0$ and $\Phi x=0$. Assumption~\ref{assumption3} gives us that $\Psi^T_Ix\neq 0$. Thus $\|\Psi_I^Tx\|_2>0=\rho \|\Psi_J^Tx\|_1+\tau \|\Phi x\|_2$ for any $\rho$ and $\tau$.

Condition~\eqref{addcond2} can be relaxed into
\begin{equation*}
\|y_I-\mathrm{sign}(\Psi_I^Tx^*)\|_2\leq \theta_1, ~~ \|y_J\|_\infty< 1.
\end{equation*}
In this case, denote $\mu_1:=\rho\theta_1+\|y_J\|_\infty, \mu_2:=\tau\theta_1+\|\beta\|_2$. If $\mu_1<1$, then every minimizer
$x_{\delta}$ of $\|\Psi^Tx\|_1$ subjective to $\|\Phi x-b\|_2\leq \delta$ satisfies
\begin{equation*}
\|\Psi^T(x_{\delta}- x^*)\|_2\leq\frac{2(1+\rho)}{1-\mu_1}\|\Psi^T_Jx^*\|_1+\left(\frac{2(1+\rho)\mu_2}{1-\mu_1}+2\tau\right)\delta.
\end{equation*}
\end{remark}


\begin{remark}
Theorem \ref{thm3} is inspired by several existing results in the standard $\ell_1$-synthesis case; for example Theorem 3.1 in \cite{rauhut2014} and Theorem 4.33 and Exercise 4.17 in \cite{foucart2013}. The result here can be viewed as an extension from the  $\ell_1$-synthesis   case to the  $\ell_1$-analysis  case.
\end{remark}

\section{Related Works}
\label{sec:3}
In the case of $\Psi=Id$, Condition~\ref{cond1} is well known in the literature for $\ell_1$ (or $\ell_1$-synthesis) minimization. It is initially proposed in \cite{Fuch1} as a sufficient condition for the $\ell_1$ solution uniqueness. For problems (\ref{eq:qpa}) and (\ref{eq:qpb}), \cite{Fuch2,Tibs} present sufficient but non-necessary conditions for solutions uniqueness. Later, its necessity is established in \cite{Gras1} for model \eqref{eq:qpa} and then in \cite{ZYC} for all models in \eqref{mod1}, assuming $\Psi=Id$ or equal to an orthogonal basis. The solution robustness of model \eqref{eq:qpa} is given under the same condition in \cite{Gras1}. Below we restrict our literature review to results for the $\ell_1$-analysis model.

\subsection{Previous Uniqueness Conditions}
Papers \cite{Gras2,NDEG,VPDF,Halt,Fadili_Peyre_Vaiter_Deledalle_Salmon13} establish the uniqueness of the $\ell_1$-analysis model and some use stronger conditions than ours. Our purpose in the comparison is restricted to uniqueness and robustness to arbitrary noise. The reader should be aware that some of them also imply stronger results such as sign consistency by Conditions~\ref{cond4} and~\ref{cond5} below; while the rest such as Conditions~\ref{cond2} and~\ref{cond3} below are not  used to derive stronger results yet.

The following condition in \cite{NDEG} guarantees the solution uniqueness for problem (\ref{eq:bp}):
\begin{condition}\label{cond2}
Given $\bar{x}$, let $Q$ be a basis matrix of $\mathrm{Ker}(\Phi)$, and $I=\mathrm{supp}(\Psi^T\bar{x})$. The followings are met:

(1) $\Psi^T_{I^c}Q$ is full column rank;

(2)$\|(Q^T\Psi_{I^c})^+Q^T\Psi_I\mathrm{sign}(\Psi^T_I\bar{x})\|_\infty<1$.
\end{condition}
Here $A^+$ is the pseudo-inverse of matrix A defined as $A^+=A^T(AA^T)^{-1}$.

Paper \cite{VPDF} proposes the following condition for the solution uniqueness and robustness for problems (\ref{eq:bp}) and (\ref{eq:qpa}) (the robustness requires the non-zero entries of $\Psi^T_I \title{x}$ to be sufficiently large compared to noise).
\begin{condition}\label{cond4}
For a given $\bar{x}$, index sets $I=\mathrm{supp}(\Psi^T\bar{x})$ and $J=I^c$ satisfy:

(1)$\mathrm{Ker}(\Psi^T_J) \bigcap \mathrm{Ker}(\Phi)=\{0\}$;

(2)Let $A^{[J]}=U(U^T\Phi^T\Phi U)^{-1}U^T$ and $\Omega^{[J]}=\Psi^+_J(\Phi^T\Phi A^{[J]}-Id)\Psi_I$, where $U$ is a basis matrix of $\mathrm{Ker}(\Psi^T_J)$. Then $$IC(\mathrm{sign}(\Psi^T_I\bar{x})):=\min_{u\in\mathrm{Ker}(\Psi_J)}\|\Omega^{[J]}\mathrm{sign}(\Psi^T_I\bar{x})-u\|_\infty<1.$$
\end{condition}

According to \cite{VPDF}, Conditions~\ref{cond2} and \ref{cond4} do not contain each other.

The following example shows that Conditions~\ref{cond2} and~\ref{cond4} are both stronger than Conditions~\ref{cond1} and~\ref{cond1'}.
Let
\begin{align*}
\Psi =\left(\begin{array}{ccc}10.5 & 1 & 10 \\0 & 1 & 0 \\ 0 & 0 & 1\end{array}\right), \quad\Phi =\left(\begin{array}{ccc} 0 & 1 & 0 \\ 0 & 0 & 1\end{array}\right),\quad \hat{x}=\left(\begin{array}{c}1\\-1 \\ -10\end{array}\right),\quad b=\left(\begin{array}{c}-1\\-10\end{array}\right).
\end{align*}
It is straightforward to verify that Conditions~\ref{cond1} and~\ref{cond1'} hold. However, Conditions~\ref{cond2} and~\ref{cond4} fail to hold.
Indeed, we have $\Psi^T\tilde{x}=(10.5, 0, 0)^T$ and $I=\{1\}$. $Q=(1,0,0)^T$ is a basis matrix of $\mathrm{Ker}(\Phi)$. Thus \begin{align*}\|(Q^T\Psi_{I^c})^+Q^T\Psi_I\mathrm{sign}(\Psi^T_I\tilde{x})\|_{\infty}=\left\|\left({10.5\over101}, {105\over101}\right)^T\right\|_\infty={105\over 101}.\end{align*}
Hence, Condition~\ref{cond2} does not hold. Furthermore, $U=(1,-1,-10)^T$ is a basis matrix of $\mathrm{Ker}(\Psi^T_J)$, and the definition of $\Omega^{[J]}$ gives us $\Omega^{[J]}=({10.5\over101}, {105\over101})^T$. Therefore, $IC(\mathrm{sign}(\Psi^T_I\tilde{x}))={105\over 101}>1$, so Condition~\ref{cond4} does not hold either. Paper \cite{VPDF} also presents sufficient conditions for solution uniqueness, which are reviewed in \cite{ZYC} and shown to be not necessary. 

During the time we are preparing this manuscript, work~\cite{Fadili_Peyre_Vaiter_Deledalle_Salmon13} gives the same condition as Condition~\ref{cond1} and shows its sufficiency. However. The necessity of Condition~\ref{cond1} or~\ref{cond1'} is never discussed in the literature.

\subsection{Previous Robustness Conditions}
Turning to solution robustness, \cite{Gras2,Halt,Fadili_Peyre_Vaiter_Deledalle_Salmon13} have studied the robustness of problems (\ref{eq:qpa}) and (\ref{eq:qpb}) in the Hilbert-space setting. Translating to the finite dimension, the condition in \cite{Gras2} is equivalent to Condition \ref{cond1'}. 
%
%
Under Condition~\ref{cond1'}, work \cite{Gras2} shows the existence of constant $C$ (not explicitly given) such that the solution $x_{\delta,\lambda}$ to (\ref{eq:qpa}) obeys $\|\Psi^T(x_{\delta,\lambda}- x^*)\|_2\leq C\delta$ when $\lambda$ is set proportional to the noise level $\delta$. \cite{Fadili_Peyre_Vaiter_Deledalle_Salmon13} gives an explicit formula of $C$ in $\|x_{\delta,\lambda}- x^*\|_2\leq C\delta$ for solution $x_{\delta,\lambda}$ to (\ref{eq:qpa}).
In order to obtain an explicit formula for $C$, \cite{Halt} introduces the following:
\begin{condition}\label{cond3}
Assume that $\Psi\Psi^T$ is invertible and let $\hat{\Psi}=(\Psi\Psi^T)^{-1}\Psi$. Given $\bar{x}$, the following two statements  hold:

(1) There exists some $y\in \partial \|\cdot\|_1(\Psi^T\bar{x})$ such that $\Psi y\in \mathrm{Im}(\Phi^T)$;

(2) For some $t\in (0, 1)$, letting $I(t)=\{i: |y_i|>t\}$, the mapping  $\hat{\Phi}:=\Phi|_{\mathrm{Span}\{\hat{\Psi}_i: i\in I(t)\}}$ is injective.
\end{condition}
Under this condition, the solutions to \eqref{eq:qpa} and \eqref{eq:qpb} are subject to error bounds whose constants depend on $t$, $\hat{\Phi}$, and other quantities.
\begin{proposition}\label{prop2}
Condition~\ref{cond3} is stronger than Condition~\ref{cond1'}.
\end{proposition}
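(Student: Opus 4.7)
The plan is to witness Condition~\ref{cond1'} using the very same dual vector $y$ and threshold $t$ supplied by Condition~\ref{cond3}. Set $I=\mathrm{supp}(\Psi^T\bar x)$, and define the partition
$$J := I(t)^c,\qquad K := I(t)\setminus I.$$
Since $y\in\partial\|\cdot\|_1(\Psi^T\bar x)$ forces $|y_i|=1$ on $I$, we automatically have $I\subseteq I(t)$, so that $(I,J,K)$ partitions $\{1,\ldots,l\}$ with $J,K\subseteq I^c$ as Condition~\ref{cond1'} requires.

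First I would verify part~(2) of Condition~\ref{cond1'} with this $y$. The four items are immediate: $\Psi y\in\mathrm{Im}(\Phi^T)$ comes directly from Condition~\ref{cond3}(1); $y_I=\mathrm{sign}(\Psi_I^T\bar x)$ is the sign condition built into the subdifferential; on $J=I(t)^c$ every entry satisfies $|y_i|\leq t$, so $\|y_J\|_\infty\leq t<1$; and on $K\subseteq I^c$ every entry satisfies $|y_i|\leq 1$, so $\|y_K\|_\infty\leq 1$.

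The main work is part~(1) of Condition~\ref{cond1'}, namely $\mathrm{Ker}(\Psi_J^T)\cap\mathrm{Ker}(\Phi)=\{0\}$. The key algebraic tool, available because Condition~\ref{cond3} assumes $\Psi\Psi^T$ invertible, is the reproducing identity
$$u = \hat\Psi\,\Psi^T u = \sum_{i=1}^l (\Psi_i^T u)\,\hat\Psi_i \qquad \text{for every } u\in\RR^n,$$
which follows from $\hat\Psi\Psi^T=(\Psi\Psi^T)^{-1}\Psi\Psi^T=Id$. If $u\in\mathrm{Ker}(\Psi_J^T)$, every term with $i\in J$ vanishes, leaving $u\in\mathrm{Span}\{\hat\Psi_i:i\in I(t)\}$; combining this with $\Phi u=0$ and the injectivity supplied by Condition~\ref{cond3}(2) forces $u=0$.

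The main obstacle I anticipate is the degenerate case $J=I(t)^c=\emptyset$, since Condition~\ref{cond1'} requires $J$ to be nonempty. I would dispose of it by a side argument: when $I(t)=\{1,\ldots,l\}$, the reproducing identity already identifies $\mathrm{Span}\{\hat\Psi_i:i\in I(t)\}$ with all of $\RR^n$ (using that the columns of $\hat\Psi$ span $\RR^n$), so Condition~\ref{cond3}(2) forces $\Phi$ itself to be injective. Then part~(1) of Condition~\ref{cond1'} holds trivially for any nonempty $J'\subseteq I^c$, so any singleton $J'=\{j\}\subseteq I^c$ with $|y_j|<1$ works; such a $j$ exists outside the (excludable) boundary configuration where $|y_i|=1$ everywhere off the support.
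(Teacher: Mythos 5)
Your argument is essentially the paper's: the same choice $J=I(t)^c$, the same bound $\|y_J\|_\infty\le t<1$, and the same identity $u=(\Psi\Psi^T)^{-1}\Psi\Psi^T u$ placing any $u\in\mathrm{Ker}(\Psi_J^T)$ in $\mathrm{Span}\{\hat\Psi_i: i\in I(t)\}$ so that the injectivity in Condition~\ref{cond3}(2) finishes part~(1). The degenerate case $I(t)^c=\emptyset$ that you raise is one the paper silently ignores; your treatment of it is not fully closed (the ``excludable'' configuration with $|y_i|=1$ everywhere off the support is never actually excluded), but this side issue does not affect the main, generic argument, which is correct and identical to the paper's.
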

\begin{proof}
Let $J=I(t)^c$; then we have $\|y_J\|_\infty\leq t<1$ from the definition of $I(t)$. It remains to show that $\textrm{Ker}(\Psi_J^T)\bigcap\textrm{Ker}(\Phi)=\{0\}$. For any $x\in\textrm{Ker}(\Psi_J^T)$, we have
\begin{align*}
x=(\Psi\Psi^T)^{-1}\Psi\Psi^Tx = (\Psi\Psi^T)^{-1}\Psi_J\Psi_J^Tx+(\Psi\Psi^T)^{-1}\Psi_{J^c}\Psi^T_{J^c}x=(\Psi\Psi^T)^{-1}\Psi_{I(t)}\Psi^T_{I(t)}x.
\end{align*}
Since $\Phi$ restricted to $\textrm{Span}\{\hat{\Psi}_i:i\in I(t)\}=\textrm{Im}((\Psi\Psi^T)^{-1}\Psi_{I(t)})$ is injective, we have what we need.
\qed\end{proof}

Definition 5 in paper \cite{VPDF} provides a much stronger condition below that strengthens Condition~\ref{cond4} by dropping the dependence on the $\Psi$-support (see the definition of $RC(I)$ below). 
\begin{condition}\label{cond5}
Given $\bar{x}$, index sets $I=\mathrm{supp}(\Psi^T\bar{x})$ and $J=I^c$ satisfy:

(1) $\mathrm{Ker}(\Psi^T_J) \bigcap \mathrm{Ker}(\Phi)=\{0\}$;

(2) Letting $\Omega^{[J]}$ be given as in Condition \ref{cond4}, $$RC(I):=\max_{p\in \RR^{|I|}, \|p\|_\infty\leq 1}\min_{u\in\textrm{Ker}(\Psi_J)}\|\Omega^{[J]}p-u\|_\infty<1.$$
\end{condition}
Under this condition, a nice error bound and a certain kind of ``weak'' sign consistency (between $\Psi^Tx_{\delta,\lambda}$ and $\Psi^T x^*$) are given provided that problem \eqref{eq:qpa} is solved with the parameter $\lambda=\frac{\rho \|w\|_2 c_J}{2(1-RC(I))}$ for some $\rho>1$, where $c_J=\|\Psi^+_J\Phi^T(\Phi A^{[J]}\Phi^T-Id)\|_{2,\infty}$. When $1-RC(I)$ gets close to 0, this $\lambda$ can become too large than it should be.

\section{Verifying the Conditions}
\label{sec:4}
In this section, we present a method to verify Condition \ref{cond1}. Our method includes two steps:

(\textbf{Step 1:}) Let $\Phi=U\Sigma V^T$ be the singular value decomposition of $\Phi$. Assume $V=[v_1,\cdots,v_n]$. Since $\Phi$ has full row-rank, we have $\textrm{Ker}(\Phi)=\textrm{Span}\{v_{m+1}, \cdots, v_n\}$ and  $Q=[v_{m+1}, \cdots, v_n]$ as a basis  of $\textrm{Ker}(\Phi)$. We verify that $\Psi_J^TQ$ has full row-rank, ensuring part (1) of Condition \ref{cond1}.

(\textbf{Step 2:}) Let $u_1=-Q^T\Psi_I\textrm{sign}(\Psi^T_I\bar{x})$ and $A=Q^T\Psi_J$. Solve the convex problem
\begin{equation}\label{prob01}
\Min_{u\in\RR^{|J|}}\|u\|_\infty, \quad\st~~Au=u_1.
\end{equation}
If the optimal objective of (\ref{prob01}) is strictly less than 1, then part (2) of Condition \ref{cond1} holds. In fact, we have:
\begin{proposition}\label{prop5}
Part (2) of Condition \ref{cond1} holds if and only if (\ref{prob01}) has an  optimal objective $<1$.
\end{proposition}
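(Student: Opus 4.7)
The plan is to exploit the bijection between vectors $y \in \RR^l$ satisfying $\Psi y \in \mathrm{Im}(\Phi^T)$ and vectors lying in the kernel of $Q^T \Psi$, where $Q$ is the basis matrix of $\mathrm{Ker}(\Phi)$ produced in Step~1. Since the columns of $Q$ span $\mathrm{Ker}(\Phi) = \mathrm{Im}(\Phi^T)^\perp$, one has the elementary identity
\begin{equation*}
\Psi y \in \mathrm{Im}(\Phi^T) \quad \Longleftrightarrow \quad Q^T \Psi y = 0.
\end{equation*}
This is the one observation that ties the abstract range condition in part~(2) of Condition~\ref{cond1} to the concrete linear system $Au = u_1$ in \eqref{prob01}. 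Once this is noted, the proof splits naturally into two short implications.

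For the ``if'' direction, I would take $u^\ast$ to be an optimal solution of \eqref{prob01} with $\|u^\ast\|_\infty < 1$, and then define $y \in \RR^l$ by $y_I := \mathrm{sign}(\Psi_I^T \bar{x})$ and $y_J := u^\ast$. Then $y_I = \mathrm{sign}(\Psi_I^T \bar{x})$ and $\|y_J\|_\infty = \|u^\ast\|_\infty < 1$ are immediate, and the range condition follows from
\begin{equation*}
Q^T \Psi y \;=\; Q^T \Psi_I y_I + Q^T \Psi_J y_J \;=\; -u_1 + A u^\ast \;=\; -u_1 + u_1 \;=\; 0,
\end{equation*}
so $\Psi y \in \mathrm{Im}(\Phi^T)$ by the identity above. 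Thus $y$ certifies part~(2) of Condition~\ref{cond1}.

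For the ``only if'' direction, I would start with any $y$ produced by part~(2) of Condition~\ref{cond1}. The range condition $\Psi y \in \mathrm{Im}(\Phi^T)$ gives $Q^T \Psi y = 0$, which rearranges to $A y_J = Q^T \Psi_J y_J = -Q^T \Psi_I y_I = -Q^T \Psi_I \mathrm{sign}(\Psi_I^T \bar{x}) = u_1$. Hence $y_J$ is feasible for \eqref{prob01}, and its objective value $\|y_J\|_\infty$ is strictly less than $1$ by hypothesis. Therefore the infimum of \eqref{prob01} is also strictly less than $1$; it is attained because \eqref{prob01} is a feasible linear program with a coercive (on the quotient by the kernel of $A$) objective. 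This concludes the equivalence.

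There is no serious obstacle here: the whole argument rests on the single linear-algebraic fact $\mathrm{Im}(Q) = \mathrm{Ker}(\Phi)^\perp \cdot \text{(perp)}$... more precisely $\mathrm{Im}(Q)^\perp = \mathrm{Im}(\Phi^T)$, plus a bookkeeping check that the $I$ and $J$ blocks of $y$ assemble consistently. The closest thing to a subtlety is confirming that the minimum in \eqref{prob01} is attained (so that ``optimal objective $<1$'' is literally meaningful), but since \eqref{prob01} can be recast as a linear program with finite optimal value whenever feasible, attainment is standard.
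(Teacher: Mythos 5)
Your proof is correct and follows essentially the same route as the paper's: both hinge on the identity $\Psi y\in \mathrm{Im}(\Phi^T)\Leftrightarrow Q^T\Psi y=0$, construct $y$ from an optimal $u^\ast$ by setting $y_I=\mathrm{sign}(\Psi_I^T\bar x)$ and $y_J=u^\ast$ for the ``if'' direction, and observe that $y_J$ is feasible with $\|y_J\|_\infty<1$ for the converse (which the paper dismisses as obvious). Your extra remark on attainment of the minimum is a small but welcome addition.
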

\begin{proof}
Let $\hat{u}$ be a minimizer of (\ref{prob01}). Assume $\|\hat{u}\|_\infty<1$. We consider the vector $y$ composed by $y_I=\textrm{sign}(\Psi^T_I\bar{x})$ and $y_J=\hat{u}$. To show part (2) of Condition \ref{cond1}, it suffices to prove $\Psi y\in \textrm{Im}(\Phi^T)$, or equivalently, $Q^T\Psi y=0$. Indeed,

\begin{equation}
Q^T\Psi y = Q^T\Psi_Jy_J+Q^T\Psi_Iy_I =Q^T\Psi_J\hat{u}+Q^T\Psi_Iy_I=0\nonumber.
\end{equation}
The converse is obvious.
\qed\end{proof}
Convex program \eqref{prob01} is similar in form to one in \cite[Definition 4]{VPDF} though they are used to verify different conditions.


\section{Proof of Theorem 1}
\label{sec:5}
We establish Theorem \ref{thm1} in two steps. Our first step proves the theorem for problem (\ref{eq:bp}) only. The second step proves Theorem \ref{thm1} for problems (\ref{eq:qpa}) and (\ref{eq:qpb}).

\subsection{Proof of Theorem \ref{thm1} for problem (\ref{eq:bp})}
The equivalence of the three statements is shown in the following orders: $3)\Longrightarrow 1) \Longrightarrow 2) \Longrightarrow 3)$.

$3)\Longrightarrow1)$. Consider any perturbation $\hat{x}+h$ where $h\in \textrm{Ker}(\Phi)\backslash\{0\}$. Take a subgradient $g\in \partial \|\cdot\|_1(\Psi^T\hat{x})$ obeying $g_I=\textrm{sign}(\Psi_I^T\hat{x})=y_I$, $g_K =y_K$, and $\|g_J\|_\infty\leq 1$ such that $\langle g_J, \Psi^T_J h\rangle =\|\Psi^T_Jh\|_1$. Then,
\begin{subequations}
\begin{align}
\|\Psi^T(\hat{x}+h)\|_1 &\geq  \|\Psi^T\hat{x}\|_1  + \langle \Psi g, h\rangle \\
&= \|\Psi^T\hat{x}\|_1  + \langle \Psi g-\Psi y, h\rangle \label{eq1}\\
&= \|\Psi^T\hat{x}\|_1  + \langle g-y, \Psi^Th\rangle\\
&= \|\Psi^T\hat{x}\|_1  +\langle g_J-y_J, \Psi_J^Th\rangle \label{eq2}\\
&\geq \|\Psi^T\hat{x}\|_1  + \|\Psi_J^Th\|_1(1-\|y_J\|_\infty)\label{eq3},
\end{align}
\end{subequations}
where (\ref{eq1}) follows from $\Psi y\in \textrm{Im}(\Phi^T)=\textrm{Ker}(\Phi)^\bot$ and $h\in \textrm{Ker}(\Phi)$, (\ref{eq2}) follows from the setting of $g$, and (\ref{eq3}) is an application of the inequality $\langle x, y\rangle\leq \|x\|_1\|y\|_\infty$ and $\langle g_J,\Psi^T_Jh\rangle=\|\Psi^T_Jh\|_1$. Since $h\in \textrm{Ker}(\Phi)\backslash\{0\}$ and $\textrm{Ker}(\Psi^T_J) \bigcap \textrm{Ker}(\Phi)=\{0\}$, we have $\|\Psi_J^Th\|_1>0$. Together with the condition $\|y_J\|_\infty<1$, we have $\|\Psi^T(\hat{x}+h)\|_1>\|\Psi^T\hat{x}\|_1$ for every $h\in \textrm{Ker}(\Phi)\backslash\{0\}$ which implies that $\hat{x}$ is the unique minimizer of (\ref{eq:bp}).

$1)\Longrightarrow2)$. For every $h\in \textrm{Ker}(\Phi)\backslash\{0\}$, we have $\Phi(\hat{x}+th)=\Phi\hat{x}$ and can find $t$ small enough around 0 such that $\textrm{sign}(\Psi_I^T(\hat{x}+th))=\textrm{sign}(\Psi^T_I\hat{x})$. Since $\hat{x}$ is the unique solution, for small and nonzero $t$ we have
\begin{subequations}
\begin{align*}
\|\Psi^T(\hat{x})\|_1 &< \|\Psi^T(\hat{x}+th)\|_1 =\|\Psi_I^T(\hat{x}+th)\|_1+\|\Psi_{I^c}^T(\hat{x}+th)\|_1\\
&= \langle \Psi_I^T(\hat{x}+th), \textrm{sign}(\Psi_I^T(\hat{x}+th))\rangle+\|t\Psi_{I^c}^Th\|_1\\
&= \langle \Psi_I^T\hat{x}+t\Psi_I^Th, \textrm{sign}(\Psi_I^T\hat{x})\rangle+\|t\Psi_{I^c}^Th\|_1\\
&= \langle \Psi_I^T\hat{x}, \textrm{sign}(\Psi_I^T\hat{x})\rangle+t\langle \Psi_I^Th, \textrm{sign}(\Psi_I^T\hat{x})\rangle+\|t\Psi_{I^c}^Th\|_1\\
&= \|\Psi^T(\hat{x})\|_1+t\langle \Psi_I^Th, \textrm{sign}(\Psi_I^T\hat{x})\rangle+\|t\Psi_{I^c}^Th\|_1.
\end{align*}
\end{subequations}
Therefore, for any $h\in \textrm{Ker}(\Phi)\backslash\{0\}$, we have
\begin{equation}\label{null}
\langle \Psi_I^Th, \textrm{sign}(\Psi_I^T\hat{x})\rangle<\|\Psi_{I^c}^Th\|_1.
\end{equation}
If the condition $\textrm{Ker}(\Psi^T_{I^c}) \bigcap \textrm{Ker}(\Phi)=\{0\}$ does not hold, we can choose a nonzero vector $h\in\textrm{Ker}(\Psi^T_{I^c})\bigcap\textrm{Ker}(\Phi)$. We also have $-h\in\textrm{Ker}(\Psi^T_{I^c})\bigcap\textrm{Ker}(\Phi)$. Then we have $\langle \Psi_I^Th, \textrm{sign}(\Psi_I^T\hat{x})\rangle<0$ and $-\langle \Psi_I^Th, \textrm{sign}(\Psi_I^T\hat{x})\rangle<0$, which is a contradiction.

It remains to show the existence of $y$ in item (2) of Condition \ref{cond1}. This part is in spirit of the methods in papers \cite{Gras1} and \cite{ZYC}, which are based on linear programming strong duality. We take $\hat{y}$ with restrictions $\hat{y}_I=\textrm{sign}(\Psi_I^T\hat{x})$ and $\hat{y}_{I^c}=0$. If such $\hat{y}$ satisfies $\Psi\hat{y}\in \textrm{Im}(\Phi^T)$, then the existence has been shown. If $\Psi\hat{y}\notin \textrm{Im}(\Phi^T)=\textrm{Ker}(\Phi)^\bot$, then we shall construct a new vector to satisfy part (2) of Condition \ref{cond1}. Let $Q$ be a basis matrix of $\textrm{Ker}(\Phi)$. We have that $a:=Q^T\Psi\hat{y}$ must be a nonzero vector. Consider the following problem
\begin{equation}\label{prob1}
\Min_{z\in \RR^l}\|z\|_\infty \quad\st~Q^T\Psi z=-a ~~\text{and}~~z_I=0.
\end{equation}
For any minimizer $\hat{z}$ of problem (\ref{prob1}), we have $\Psi(\hat{y}+\hat{z})\in \textrm{Ker}(\Phi)^\bot=\textrm{Im}(\Phi^T)$ and $(\hat{y}+\hat{z})_I=\hat{y}_I=\textrm{sign}(\Psi_I^T\hat{x})$. Thus, we shall show that the objective of problem (\ref{prob1}) is strictly less than 1. To this end, we rewrite  problem (\ref{prob1}) in an equivalent form as:
\begin{equation}\label{prob2}
\Min_{z}\|z_{I^c}\|_\infty \quad\st~Q^T\Psi_{I^c}z_{I^c}=-a,
\end{equation}
whose Lagrange dual problem is
\begin{equation}\label{prob3}
\Max_p \langle p, a\rangle \quad\st~\|\Psi_{I^c}^TQp\|_1\leq 1.
\end{equation}
Note that $Qp\in \textrm{Ker}(\Phi)$ and $|\langle p, a\rangle|=|\langle p, Q^T\Psi\hat{y}\rangle|=|\langle p, Q^T\Psi_I\textrm{sign}(\Psi^T_I\hat{x})\rangle|=|\langle \Psi^T_I Qp, \textrm{sign}(\Psi^T_I\hat{x})\rangle|$. By using (\ref{null}), for any $p$ we have
\begin{equation*}
|\langle p, a\rangle|=\left\{\begin{array}{ll}
|\langle \Psi^T_I Qp, \textrm{sign}(\Psi^T_I\hat{x})\rangle|=0,& \textrm{if}~~Qp=0;\\
|\langle \Psi^T_I Qp, \textrm{sign}(\Psi^T_I\hat{x})\rangle|<\|\Psi_{I^c}^TQp\|_1\leq 1, & \textrm{otherwise}.
\end{array}\right.
\end{equation*}
Hence, problem \eqref{prob3} is feasible, and its objective value is strictly less than 1. By the linear programming strong duality property,  problems \eqref{prob1} and \eqref{prob2} also have solutions, and their the  objective value is strictly less than 1, too.  This completes the proof.

$2)\Longrightarrow3)$. Let $J=I^c$ and $K =\emptyset$; then Condition~\ref{cond1'} follows.

The proof of $3)\Longrightarrow1)$ is a standard technique in compressed sensing community.

\subsection{Proof of Theorem \ref{thm1} for problems (\ref{eq:qpa}) and (\ref{eq:qpb})}
\begin{lemma}\label{lem4}
Let $\gamma>0$. If $\gamma\|\Phi x-b\|_2^2+\|\Psi^Tx\|_1$ is constant on a convex set $\Omega$,  then both $\Phi x-b$ and $\|\Psi^Tx\|_1$ are constant on $\Omega$.
\end{lemma}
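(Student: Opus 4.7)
The plan is to exploit convexity of both summands and a strict convexity argument for the quadratic term. Let $f(x):=\gamma\|\Phi x-b\|_2^2$ and $g(x):=\|\Psi^Tx\|_1$. Both are convex, and by hypothesis $f+g$ is constant on the convex set $\Omega$. Pick arbitrary $x,y\in\Omega$ and any $\lambda\in[0,1]$; then $\lambda x+(1-\lambda)y\in\Omega$ and the identity
\begin{equation*}
f\bigl(\lambda x+(1-\lambda)y\bigr)+g\bigl(\lambda x+(1-\lambda)y\bigr)=\lambda(f(x)+g(x))+(1-\lambda)(f(y)+g(y))
\end{equation*}
combined with the two separate convexity inequalities $f(\lambda x+(1-\lambda)y)\le\lambda f(x)+(1-\lambda)f(y)$ and $g(\lambda x+(1-\lambda)y)\le\lambda g(x)+(1-\lambda)g(y)$ forces both inequalities to hold with equality. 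In particular, $\lambda\mapsto f(\lambda x+(1-\lambda)y)$ is affine on $[0,1]$.

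Next I would turn this affineness into the conclusion that $\Phi x-b$ is constant on $\Omega$. Writing $u=\Phi x-b$ and $v=\Phi y-b$, I expand
\begin{equation*}
f(\lambda x+(1-\lambda)y)=\gamma\|\lambda u+(1-\lambda)v\|_2^2=\gamma\bigl(\lambda^2\|u-v\|_2^2+2\lambda\langle v,u-v\rangle+\|v\|_2^2\bigr).
\end{equation*}
Since this quadratic polynomial in $\lambda$ has been shown to be affine, the coefficient of $\lambda^2$ must vanish, giving $\|u-v\|_2^2=0$, i.e.\ $\Phi x-b=\Phi y-b$. As $x,y$ were arbitrary in $\Omega$, the vector $\Phi x-b$ is constant on $\Omega$.

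Consequently $f$ is constant on $\Omega$, and subtracting from the constant $f+g$ shows $g(x)=\|\Psi^Tx\|_1$ is also constant on $\Omega$, which is exactly what the lemma asserts. The only substantive step is the second one—turning affineness of $\phi(\lambda):=\|\lambda u+(1-\lambda)v\|_2^2$ into $u=v$—and this is immediate from matching coefficients of the quadratic expansion, so I do not expect any real obstacle; the rest is just bookkeeping with convex inequalities.
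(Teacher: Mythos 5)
Your proof is correct and takes essentially the same route as the paper's: both arguments rest on the strict convexity of $\gamma\|\cdot\|_2^2$ along the segment joining two points of $\Omega$. The paper phrases this as a contradiction (if $\Phi x_1-b\neq\Phi x_2-b$, strict convexity pushes the objective at an interior point strictly below its constant value), while you extract the same fact by noting both convexity inequalities must be equalities and matching the $\lambda^2$ coefficient; the difference is purely presentational.
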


\begin{proof}
It suffices to prove the case where the convex set has more than one point. Suppose $x_1$ and $x_2$ are arbitrary two \emph{different} points in $\Omega$. Consider the line segment $L$ connecting $x_1$ and $x_2$. By the convexity of set $\Omega$, we know $L\subset \Omega$. Thus, $\hat{c}=\gamma\|\Phi x-b\|_2^2+\|\Psi^Tx\|_1$ is a constant on $L$. If $\Phi x_1-b\neq \Phi x_2-b$, then for any $0<\alpha <1$, we have
\begin{subequations}
\begin{align}
&\quad\gamma\|\Phi(\alpha x_1+(1-\alpha)x_2)-b\|_2^2+\|\Psi^T(\alpha x_1+(1-\alpha)x_2)\|_1\\
&= \gamma\|\alpha(\Phi x_1-b)+(1-\alpha)(\Phi x_2 -b)\|_2^2+\|\alpha(\Psi^T x_1)+(1-\alpha)(\Psi^T x_2)\|_1\\
&<\alpha (\gamma\|\Phi x_1-b\|_2^2 +\|\Psi^Tx_1\|_1) +(1-\alpha)(\gamma\|\Phi x_2-b\|_2^2+\|\Psi^Tx_2\|_1)\\
&=\alpha \hat{c}+(1-\alpha)\hat{c}=\hat{c},
\end{align}
\end{subequations}
where the strict inequality follows from the \emph{strict} convexity of $\gamma\|\cdot\|_2^2$ and the convexity of $\|\Psi^Tx\|_1$.  This means that the points $\alpha x_1+(1-\alpha)x_2$ on $L$ attain a lower value than $\hat{c}$, which is a contradiction. Therefore, we have $\Phi x_1-b= \Phi x_2-b$, from which it is easy to see $\|\Psi^Tx_1\|_1 = \|\Psi^Tx_2\|_1$.
\qed\end{proof}

We let $X_\lambda$ and $Y_\delta$ denote the sets of solutions to problems (\ref{eq:qpa}) and (\ref{eq:qpb}), respectively; moreover, we assume that these two sets are nonempty. Then, from Lemma \ref{lem4}, we have the following result.

\begin{corollary}\label{cor2}
In problem (\ref{eq:qpa}), $\Phi x-b$ and $\|\Psi^Tx\|_1$ are constant on $X_\lambda$; in problem (\ref{eq:qpb}), $\Phi x-b$ and $\|\Psi^Tx\|_1$ are constant on $Y_\delta$.
\end{corollary}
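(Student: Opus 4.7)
The plan is to handle the two problems separately. For problem \eqref{eq:qpa}, the argument is immediate from Lemma~\ref{lem4}. The solution set $X_\lambda$ is convex (as the set of minimizers of a convex function) and the objective $\|\Phi x - b\|_2^2 + \lambda\|\Psi^T x\|_1$ is constant on $X_\lambda$, equal to the optimal value. Dividing through by $\lambda > 0$ rewrites this constant expression as $\gamma\|\Phi x - b\|_2^2 + \|\Psi^T x\|_1$ with $\gamma = 1/\lambda > 0$, so Lemma~\ref{lem4} applies directly and yields that both $\Phi x - b$ and $\|\Psi^T x\|_1$ are constant on $X_\lambda$. No further work is needed here.

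For problem \eqref{eq:qpb}, half of the conclusion is trivial: $\|\Psi^T x\|_1$ is constant on $Y_\delta$ by definition (equal to the optimal value). The non-trivial half is that $\Phi x - b$ is constant on $Y_\delta$, and Lemma~\ref{lem4} cannot be invoked directly since the objective of \eqref{eq:qpb} contains no squared residual term. My plan is to argue by contradiction: assume there exist $x_1, x_2 \in Y_\delta$ with $\Phi x_1 - b \neq \Phi x_2 - b$, form the midpoint $x_\alpha = \tfrac{1}{2}(x_1 + x_2)$, which lies in $Y_\delta$ by convexity of the solution set, and then invoke strict convexity of $\|\cdot\|_2^2$ to conclude $\|\Phi x_\alpha - b\|_2 < \delta$, i.e., $x_\alpha$ is strictly feasible.

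The main obstacle is turning this strict feasibility at $x_\alpha$ into a genuine contradiction. The key idea is that strict feasibility means some open ball around $x_\alpha$ sits entirely inside the feasible set $\{x:\|\Phi x-b\|_2\le\delta\}$, so on this ball $x_\alpha$ is a local minimizer of the convex function $\|\Psi^T x\|_1$. Invoking the standard fact that every local minimizer of a convex function is a global minimizer, I would deduce that $\|\Psi^T x_\alpha\|_1 = 0$, which drives the optimal value of \eqref{eq:qpb} to zero and forces $\Psi^T x_1 = \Psi^T x_2 = 0$. Finally, appealing to Assumption~\ref{assumption3} (full row rank of $\Psi$, equivalently $\mathrm{Ker}(\Psi^T)=\{0\}$), I conclude $x_1 = x_2 = 0$, contradicting $\Phi x_1 - b \neq \Phi x_2 - b$. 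This last appeal to Assumption~\ref{assumption3} is essential: without it, trivial examples such as $\Psi = 0$ show that $\Phi x - b$ need not be constant on $Y_\delta$.
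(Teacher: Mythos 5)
Your proof is correct. For \eqref{eq:qpa} you and the paper do exactly the same thing: the objective is constant on the convex set $X_\lambda$, so Lemma~\ref{lem4} applies verbatim. For \eqref{eq:qpb} your route is genuinely different from the paper's. The paper splits into two cases: if $0\in Y_\delta$ then $Y_\delta=\{0\}$ by Assumption~\ref{assumption3} and the claim is trivial; otherwise it shows the constraint must be active, $\|\Phi x-b\|_2=\delta$, at every minimizer, by noting that a strictly feasible nonzero minimizer $\hat x$ could be shrunk to $(1-\alpha)\hat x$, which stays feasible and strictly decreases $\|\Psi^T\hat x\|_1>0$. (As written the paper only asserts that the residual \emph{norm} is constant; upgrading this to constancy of the residual \emph{vector} still requires the strict-convexity-of-the-midpoint observation that you make explicit.) You instead take two minimizers with distinct residuals, use strict convexity of $\|\cdot\|_2^2$ to show their midpoint is a strictly feasible minimizer, invoke the fact that a local minimizer of a convex function is global to force the optimal value of \eqref{eq:qpb} to zero, and then let Assumption~\ref{assumption3} collapse both minimizers to the origin, giving the contradiction. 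Both arguments rest on the same two ingredients --- strict convexity of the squared residual and injectivity of $\Psi^T$ --- but yours avoids the case split and proves the vector-valued claim directly, whereas the paper's records the separately useful fact that the constraint is active at every minimizer whenever $Y_\delta\neq\{0\}$. Your closing remark that Assumption~\ref{assumption3} is genuinely needed (e.g.\ $\Psi=0$ breaks the claim) is also correct and matches where the paper invokes full row-rankness of $\Psi$.
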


\begin{proof}
Since $\|\Phi x-b\|_2^2+\lambda\|\Psi^Tx\|_1$ is constant over $X_\lambda$, the result follows directly from Lemma~\ref{lem4} for problem~(\ref{eq:qpa}). For problem~(\ref{eq:qpb}), if $0\in Y_\delta$, then we have $Y_\delta=\{0\}$ because of the full row-rankness of $\Psi$. The result holds trivially. Suppose $0\not\in Y_\delta$. Since the optimal objective $\|\Psi^Tx\|_1$ is constant for all $x\in Y_\delta$, we have to show that $\|\Phi x-b\|_2^2=\delta$ for all $x\in Y_\delta$. If there exist a nonzero $\hat{x}\in Y_\delta$ such that $\|\Phi \hat{x}-b\|_2^2<\delta$, we can find a non-empty ball $\mathcal{B}$ centered at $\hat{x}$ with a sufficiently small radius $\rho>0$ such that $\|\Phi \tilde{x}-b\|_2^2<\delta$ for all $\tilde{x}\in\mathcal{B}$. Let $\alpha =\min\{{\rho\over 2\|\hat{x}\|_2},{1\over2}\}\in (0,1)$. We have $(1-\alpha)\hat{x}\in\mathcal{B}$ and $\|(1-\alpha)\Psi^T\hat{x}\|_1 <\|\Psi^T\hat{x}\|_1$, which is a contradiction.
\qed\end{proof}

\begin{proof}[Proof of Theorem \ref{thm1} for problems (\ref{eq:qpa}) and (\ref{eq:qpb})]
This proof exploits Corollary \ref{cor2}. Since the results of Corollary \ref{cor2} are identical for problems \eqref{eq:qpa} and \eqref{eq:qpb}, we present the proof for problem \eqref{eq:qpa} only.

By assumption, $X_\lambda$ is nonempty so we pick $\hat{x}\in X_\lambda$. Let $b^* = \Phi \hat{x}$, which is independent of the choice of $\hat{x}$ according to Corollary \ref{cor2}. We introduce the following problem
\beq\label{eq:dd}
\Min_x\|\Psi^Tx\|_1,\quad\st~\Phi x= b^*,
\eeq
and let $X^*$ denote its solution set.

Now, we show that $X_\lambda = X^*$. Since $\Phi x = \Phi \hat{x}$ and $\|\Psi^Tx\|_1 = \|\Psi^T\hat{x}\|_1$ for all $x\in X_\lambda$ and conversely any $x$ obeying $\Phi x=\Phi \hat{x}$ and $\|\Psi^T x\|_1=\|\Psi^T \hat{x}\|_1$  belongs to $X_\lambda$, it suffices to show that $\|\Psi^Tx\|_1 = \|\Psi^T\hat{x}\|_1$ for any ${x}\in X^*$. Assuming this does \emph{not} hold, then since problem \eqref{eq:dd} has  $\hat{x}$ as a feasible solution and has a finite objective,  we have a nonempty $X^*$ and there exists $\tilde{x}\in X^*$ satisfying $\|\Psi^T\tilde{x}\|_1 < \|\Psi^T\hat{x}\|_1$. But,  $\|\Phi \tilde{x}-b\|_2 = \|b^*-b\|_2=\|\Phi \hat{x}-b\|_2$ and $\|\Psi^T\tilde{x}\|_1 < \|\Psi^T\hat{x}\|_1$ mean that $\tilde{x}$ is a strictly better solution to problem \eqref{eq:qpa} than $\hat{x}$, contradicting the assumption $\hat{x}\in X_\lambda$.

Since $X_\lambda = X^*$,  $\hat{x}$ is the unique solution to problem \eqref{eq:qpa} if and only if it is the unique solution to problem \eqref{eq:dd}. Since problem \eqref{eq:dd} is in the same form of problem \eqref{eq:bp}, applying the part of Theorem \ref{thm1} for problem \eqref{eq:bp}, which is already proved, we conclude that  $\hat{x}$ is the unique solution to problem \eqref{eq:qpa} if and only if Condition \ref{cond1} or \ref{cond1'} holds.
\qed\end{proof}

\section{Proof of Theorem 2}
\label{sec:6}

\begin{lemma}\label{lem2}
Assume that vectors $\bar{x}$ and $y$ satisfy Condition \ref{cond1}. Let $I=\mathrm{supp}(\Psi^T\bar{x})$ and $J=I^c$. We have
\begin{equation}\label{ineq0}
\|\Psi^Tx-\Psi^T\bar{x}\|_1\leq C_3\|\Phi(x-\bar{x})\|_2+C_4 d_y(x,\bar{x}),\quad \forall x,
\end{equation}
where $d_y(x,\bar{x}):=\|\Psi^Tx\|_1-\|\Psi^T\bar{x}\|_1-\langle \Psi y, x-\bar{x}\rangle$ is the Bregman distance of function $\|\Psi^T\cdot\|_1$, the absolute constants $C_3, C_4$ are given in Theorem \ref{thm2}.
\end{lemma}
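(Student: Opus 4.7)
The plan is to split $\|\Psi^T(x-\bar{x})\|_1 = \|\Psi_I^T h\|_1 + \|\Psi_J^T h\|_1$ with $h := x-\bar{x}$, bound each piece separately against $\|\Phi h\|_2$ and $d_y(x,\bar{x})$, and then add. First, pick a subgradient $g \in \partial\|\cdot\|_1(\Psi^T\bar{x})$ with $g_I = y_I = \mathrm{sign}(\Psi_I^T\bar{x})$ and $g_J$ a dual vector of $\Psi_J^T h$, so that $\|g_J\|_\infty \leq 1$ and $\langle g_J, \Psi_J^T h\rangle = \|\Psi_J^T h\|_1$. The convexity inequality $\|\Psi^T x\|_1 \geq \|\Psi^T\bar{x}\|_1 + \langle g, \Psi^T h\rangle$ then yields
\[ d_y(x,\bar{x}) \geq \langle g-y,\Psi^T h\rangle = \langle g_J-y_J,\Psi_J^T h\rangle \geq (1-\|y_J\|_\infty)\|\Psi_J^T h\|_1, \]
so $\|\Psi_J^T h\|_1 \leq d_y(x,\bar{x})/(1-\|y_J\|_\infty)$, which accounts for the ``$1$'' in the numerator of $C_4$.

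Next, bound the $I$-part by $\|\Psi_I^T h\|_1 \leq \sqrt{|I|}\,\|\Psi_I^T h\|_2 \leq \sqrt{|I|}\,\|h\|_2$ via Cauchy--Schwarz and $\|\Psi\|=1$ (Assumption~\ref{assumption2}), reducing the task to a bound on $\|h\|_2$. Decompose $h = h_1 + h_2$ with $h_1 \in \mathrm{Ker}(\Psi_J^T)$, so that $\Psi_J^T h_2 = \Psi_J^T h$. The definition of $r(J)$ gives
\[ \|h_1\|_2 \leq r(J)\,\|\Phi h_1\|_2 \leq r(J)\bigl(\|\Phi h\|_2 + \|\Phi\|\,\|h_2\|_2\bigr), \]
while the injectivity of $\Psi^T$ under Assumption~\ref{assumption3} together with Assumption~\ref{assumption2} supplies the global bound $\|v\|_2 \leq \mathrm{Cond}(\Psi)\,\|\Psi^T v\|_2$ for every $v\in\RR^n$. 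A judicious choice of $h_2$ on the affine set $\{v:\Psi_J^T v=\Psi_J^T h\}$ will deliver $\|h_2\|_2 \leq \mathrm{Cond}(\Psi)\,\|\Psi_J^T h\|_1$; plugging this into the bound above and multiplying through by $\sqrt{|I|}$ produces
\[ \|\Psi_I^T h\|_1 \leq C_3\,\|\Phi h\|_2 + \mathrm{Cond}(\Psi)\,\|\Phi\|\,C_3\,\|\Psi_J^T h\|_1. \]
Adding the two halves back together and using the first step to convert $\|\Psi_J^T h\|_1$ into $d_y/(1-\|y_J\|_\infty)$ then yields exactly $\|\Psi^T h\|_1 \leq C_3\|\Phi h\|_2 + C_4\,d_y(x,\bar{x})$.

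The delicate step is the tail bound $\|h_2\|_2 \leq \mathrm{Cond}(\Psi)\,\|\Psi_J^T h\|_1$. Taking $h_2$ to be the orthogonal projection of $h$ onto $\mathrm{Im}(\Psi_J)$ only yields a constant of order $1/\sigma_{\min}(\Psi_J)$, which can strictly exceed $\mathrm{Cond}(\Psi)=1/\sigma_{\min}(\Psi)$ when $J$ picks out a badly conditioned subset of columns. Attaining the clean $C_4$ of the statement therefore requires exploiting $\Psi$ globally rather than $\Psi_J$ alone: one must choose $h_2$ so that $\Psi^T h_2$ remains essentially concentrated on the $J$-coordinates (hence controlled by $\|\Psi_J^T h\|_1$), after which the global inequality $\|h_2\|_2 \leq \mathrm{Cond}(\Psi)\|\Psi^T h_2\|_2$ closes the estimate.
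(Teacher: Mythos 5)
Your Step 1 is correct (and is essentially the elementary subgradient argument the paper uses to prove Theorem~\ref{thm3}), and your decomposition $h=h_1+h_2$ with $h_1\in\mathrm{Ker}(\Psi_J^T)$ mirrors the paper's proof, where $u\in\mathrm{Ker}(\Psi_J^T)$ plays the role of $\bar{x}+h_1$ and the chain $\|\Psi^T(u-\bar{x})\|_1\le\sqrt{|I|}\,r(J)\,\|\Phi(u-\bar{x})\|_2$ is exactly your kernel estimate. But the second half of your argument has two genuine gaps. First, a bookkeeping error: you bound $\|\Psi_I^Th\|_1\le\sqrt{|I|}\,\|h\|_2$ and $\|h\|_2\le\|h_1\|_2+\|h_2\|_2$, yet only the $\|h_1\|_2$ piece is converted into $C_3\|\Phi h\|_2+C_3\|\Phi\|\|h_2\|_2$; the leftover term $\sqrt{|I|}\,\|h_2\|_2$ is silently dropped, so even granting your ``delicate step'' you land on a constant strictly larger than $C_4$. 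Second, and more seriously, the delicate step $\|h_2\|_2\le\mathrm{Cond}(\Psi)\,\|\Psi_J^Th\|_1$ is never established: you give no construction of $h_2$, you concede that the natural (minimum-norm) choice only yields $1/\sigma_{\min}(\Psi_J)$, and the route you propose instead --- finding $h_2$ with $\Psi_J^Th_2=\Psi_J^Th$ and $\|\Psi^Th_2\|_2\le\|\Psi_J^Th\|_1$, i.e.\ $\|\Psi_I^Th_2\|_2^2\le\|\Psi_J^Th\|_1^2-\|\Psi_J^Th\|_2^2$ --- breaks down whenever $\Psi_J^Th$ is supported on a single coordinate, since the right-hand side is then zero and the affine set $\{v:\Psi_J^Tv=\Psi_J^Th\}$ need not meet $\mathrm{Ker}(\Psi_I^T)$. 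The proof as written therefore does not close.

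The idea you are missing is the paper's Part 2: a Lagrangian/minimax duality argument showing that the Bregman distance controls the \emph{entire} quantity $f(x)=\min\{\|\Psi^T(x-u)\|_1:u\in\mathrm{Ker}(\Psi_J^T)\}$, namely $f(x)\le(1-\|y_J\|_\infty)^{-1}d_y(x,\bar{x})$, and not merely its $J$-block $\|\Psi_J^Th\|_1$ (which is a lower bound for $f(x)$, so your Step 1 is strictly weaker). Writing $h_2=x-u^*$ for the optimal $u^*$, this bounds $\|\Psi_I^Th_2\|_1$ directly by the Bregman distance, so no separate control of the $I$-part of $\Psi^Th_2$ by $\|\Psi_J^Th\|_1$ is ever needed; the paper then only uses your kernel estimate on $h_1=u^*-\bar{x}$ together with $\|\Phi h_2\|_2\le\|\Phi\|\,\mathrm{Cond}(\Psi)\,\|\Psi^Th_2\|_1$ to assemble $C_4=(1+\mathrm{Cond}(\Psi)\|\Phi\|C_3)/(1-\|y_J\|_\infty)$. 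If you are willing to give up the explicit constants, your strategy can be salvaged: part (1) of Condition~\ref{cond1} plus a homogeneity--compactness argument yields \emph{some} $\rho,\tau$ with $\|\Psi_I^Th\|_1\le\rho\|\Psi_J^Th\|_1+\tau\|\Phi h\|_2$ (this is essentially \eqref{addcond1}), which combined with your Step 1 proves the lemma with unspecified constants --- but the content of Lemma~\ref{lem2} as stated is precisely the explicit $C_3$ and $C_4$ fed into Theorem~\ref{thm2}.
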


\begin{proof}
This proof is divided into two parts.
 They are partially inspired by \cite{Gras2}. 
\begin{enumerate}
\item this part shows that for any $u\in \textrm{Ker}(\Psi^T_J)$,
\begin{equation}\label{ineq11}
\|\Psi^Tx-\Psi^T\bar{x}\|_1\leq  \left(1+\frac{C_3\|\Phi\|}{\sqrt{\lambda_{\min}(\Psi\Psi^T)}}\right)\|\Psi^T(x-u)\|_1 +C_3\|\Phi(x-\bar{x})\|_2.
\end{equation}
\item this part shows that
\begin{equation}\label{ineq22}
f(x):=\min \left\{ \|\Psi^T(x-u)\|_1 : ~u\in\textrm{Ker}(\Psi^T_J)\right\}\le(1-\|y_J\|_\infty)^{-1}d_y(x,\bar{x}).
\end{equation}
\end{enumerate}
Using the definition of $C_4$, combining \eqref{ineq11} and \eqref{ineq22} gives \eqref{ineq0}.

\emph{Part 1}. Let  $u\in \textrm{Ker}(\Psi^T_J)$. By the triangle inequality of norms, we get
\begin{equation}\label{eq0}
\|\Psi^Tx-\Psi^T\bar{x}\|_1\leq \|\Psi^T(x-u)\|_1 +\|\Psi^T(u-\bar{x})\|_1.
\end{equation}
Since $\bar{x}\in \textrm{Ker}(\Psi^T_J)$, we have $u-\bar{x}\in \textrm{Ker}(\Psi^T_J)$ and thus $\|u-\bar{x}\|_2\leq r(J)\|\Phi(u-\bar{x})\|_2$, where $r(J)<+\infty$ follows from part (1) of Condition \ref{cond1}. Using the fact that $\supp(\Psi^T(u- \bar{x}))=I$, we derive that
\begin{subequations}\label{eq11}
\begin{align}
\|\Psi^T(u- \bar{x})\|_1 &\leq  \sqrt{|I|}\|\Psi^T(u- \bar{x})\|_2\\
&\leq \sqrt{|I|}\|u-\bar{x}\|_2 \\
&\leq \sqrt{|I|}\, r(J)\|\Phi(u-\bar{x})\|_2\\
&= C_3\|\Phi(u-\bar{x})\|_2,
\end{align}
\end{subequations}
where we have used the assumption  $\lambda_{\max}(\Psi\Psi^T)=1$ and the definition $C_3=r(J)\sqrt{|I|}$. Furthermore,
\begin{subequations}\label{eq22}
\begin{align}
\|\Phi(u-\bar{x})\|_2&\leq  \|\Phi(x- u)\|_2+\|\Phi(x-\bar{x})\|_2\\
&\leq \|\Phi\|\|x- u\|_2+\|\Phi(x-\bar{x})\|_2\\
&\leq \frac{\|\Phi\|\|\Psi^T(x- u)\|_2}{\sqrt{\lambda_{\min}(\Psi\Psi^T)}}+\|\Phi(x-\bar{x})\|_2\\
&\leq \frac{\|\Phi\|\|\Psi^T(x- u)\|_1}{\sqrt{\lambda_{\min}(\Psi\Psi^T)}}+\|\Phi(x-\bar{x})\|_2.
\end{align}
\end{subequations}
Therefore, we get \eqref{ineq11} after combining \eqref{eq0}, \eqref{eq11}, and \eqref{eq22}.

\emph{Part 2}.
Since $\langle \Psi y, \bar{x}\rangle =\|\Psi^T\bar{x}\|_1$ implies $d_y(x,\bar{x})=\|\Psi^Tx\|_1-\langle \Psi y, x \rangle$, it is equivalent to proving
 \begin{equation*}
f(x)\leq (1-\|y_J\|_\infty)^{-1}(\|\Psi^Tx\|_1-\langle \Psi y, x \rangle).
\end{equation*}
Since $u\in\textrm{Ker}(\Psi^T_J)$ is equivalent to $\Psi^T_Ju = 0$, the Lagrangian  of the minimization problem in \eqref{ineq22} is
\begin{equation*}L(u, v)=\|\Psi^T(x-u)\|_1 +\langle v, \Psi^T_J u\rangle =\|\Psi^T(x-u)\|_1  +\langle \Psi_J v,  u-x \rangle+ \langle \Psi_Jv, x\rangle.\end{equation*}
Then, $f(x)=\min_u\max_v L(u,v)$. Following the minimax theorem, we derive that
\begin{subequations}
\begin{align*}
&f(x)= \max_v\min_u L(u,v)\\
=&\max_v\min_u \{\|\Psi^T(x-u)\|_1  +\langle \Psi_J v,  u-x \rangle+ \langle \Psi_Jv, x\rangle\}\\
=&\max_w\min_u \{\|\Psi^T(x-u)\|_1  +\langle w,  u-x \rangle+ \langle w, x\rangle: w\in \textrm{Im}(\Psi_J)\}\\
=&\max_w \{ \langle w, x\rangle: w\in \partial \|\Psi^T\cdot\|_1(0)\cap\textrm{Im}(\Psi_J)\}\\
=&\max_w \{ \langle c\Psi y+ w, x\rangle: w\in \partial \|\Psi^T\cdot\|_1(0)\cap \textrm{Im}(\Psi_J)\}-\langle c\Psi y, x\rangle, ~~\forall c>0\\
=&c \max_w \{ \langle \Psi y+ w, x\rangle: w\in  c^{-1}\partial \|\Psi^T\cdot\|_1(0)\cap \textrm{Im}(\Psi_J)\}-c \langle \Psi y, x\rangle,  ~~\forall c>0.
\end{align*}
\end{subequations}
Let $$c=(1-\|y_J\|_\infty)^{-1}$$ and $Z_J=\{z\in \RR^l:z_I=0\}$.  Since $\|y_J\|_\infty<1$ from part (2) of Condition \ref{cond1}, we have $c<+\infty$ and get
\begin{equation*}
\left(y+c^{-1}\partial \|\cdot\|_1(0)\cap Z_J\right)\subset \partial \|\cdot\|_1(0),
\end{equation*}
from which we conclude
 \begin{equation*}
\left(\Psi y+c^{-1}\partial \|\Psi^T\cdot\|_1(0)\cap \textrm{Im}(\Psi_J)\right) \subset \partial \|\Psi^T\cdot\|_1(0).
\end{equation*}
Hence, for any $w\in c^{-1}\partial \|\Psi^T\cdot\|_1(0)\cap \textrm{Im}(\Psi_J)$, it holds $\Psi y+ w\subset \partial \|\Psi^T\cdot\|_1(0)$, which by the convexity of $\|\Psi^T\cdot\|_1$ implies $\langle \Psi y+ w, x\rangle\leq \|\Psi^Tx\|_1$.
Therefore, $f(x)\leq c(\|\Psi^Tx\|_1-\langle \Psi y, x\rangle)$.
\qed\end{proof}

\begin{lemma}\label{lem3}(\cite{BO}, Theorem 3; \cite{Gras1}, Lemma 3.5)
Suppose that $x^*\in \RR^n$ is a fixed vector obeying $\supp(\Psi^Tx^*)=I$ and that there are vectors satisfying
$y\in \partial \|\cdot\|_1(\Psi^Tx^*)$ and $\Psi y=\Phi^T \beta$. Then for every $\delta>0$ and every data vector $b$ satisfying $\|\Phi x^*-b\|_2\leq \delta$, the following two statements hold:

1)Every minimizer $x_{\delta,\lambda}$ of problem (\ref{eq:qpa}) satisfies $d_y(x_{\delta,\lambda}, x^*)\leq \frac{(\delta+\lambda\|\beta\|_2/2)^2}{\lambda}$
and $\|\Phi x_{\delta,\lambda}-b\|_2\leq \delta+\lambda\|\beta\|_2$;

2)Every minimizer $x_\delta$ of problem (\ref{eq:qpb}) satisfies $d_y(x_\delta, x^*)\leq 2\delta\|\beta\|_2$.
\end{lemma}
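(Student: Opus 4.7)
The plan is to derive both inequalities from the optimality of the respective minimizers, using the key algebraic identity provided by $\Psi y = \Phi^T\beta$, which lets us translate inner products against $\Psi y$ into inner products against residuals $\Phi x - b$ controllable by Cauchy--Schwarz and the noise bound $\|\Phi x^* - b\|_2 \le \delta$. The nonnegativity of the Bregman distance (from convexity of $\|\Psi^T\cdot\|_1$) will do the rest of the work via a completing-the-square argument for part 1.

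For part 1, I would start from the optimality of $x_{\delta,\lambda}$ against the feasible point $x^*$ in \eqref{eq:qpa}, yielding
\begin{equation*}
\|\Phi x_{\delta,\lambda} - b\|_2^2 + \lambda\bigl(\|\Psi^T x_{\delta,\lambda}\|_1 - \|\Psi^T x^*\|_1\bigr) \le \|\Phi x^* - b\|_2^2 \le \delta^2.
\end{equation*}
Expanding the Bregman distance gives
\begin{equation*}
\|\Psi^T x_{\delta,\lambda}\|_1 - \|\Psi^T x^*\|_1 = d_y(x_{\delta,\lambda},x^*) + \langle \Psi y, x_{\delta,\lambda} - x^*\rangle,
\end{equation*}
and substituting $\Psi y = \Phi^T\beta$ lets me rewrite $\langle \Psi y, x_{\delta,\lambda} - x^*\rangle = \langle \beta, (\Phi x_{\delta,\lambda}-b) + (b-\Phi x^*)\rangle$. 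Setting $r = \|\Phi x_{\delta,\lambda} - b\|_2$, Cauchy--Schwarz and $\|\Phi x^* - b\|_2 \le \delta$ produce
\begin{equation*}
\lambda\,d_y(x_{\delta,\lambda},x^*) + r^2 \le \delta^2 + \lambda\|\beta\|_2 r + \lambda\|\beta\|_2 \delta.
\end{equation*}
Completing the square on the $r$-terms rewrites this as
\begin{equation*}
\lambda\,d_y(x_{\delta,\lambda},x^*) + \bigl(r - \tfrac{\lambda\|\beta\|_2}{2}\bigr)^2 \le \bigl(\delta + \tfrac{\lambda\|\beta\|_2}{2}\bigr)^2,
\end{equation*}
from which dropping either nonnegative term on the left yields the two claimed bounds: $d_y(x_{\delta,\lambda},x^*) \le (\delta+\lambda\|\beta\|_2/2)^2/\lambda$ and $r \le \delta + \lambda\|\beta\|_2$.

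For part 2, I would use that $x^*$ is feasible for \eqref{eq:qpb} since $\|\Phi x^* - b\|_2 \le \delta$, so optimality gives $\|\Psi^T x_\delta\|_1 \le \|\Psi^T x^*\|_1$. Therefore
\begin{equation*}
d_y(x_\delta, x^*) \le -\langle \Psi y, x_\delta - x^*\rangle = -\langle \beta, \Phi x_\delta - b\rangle - \langle \beta, b - \Phi x^*\rangle,
\end{equation*}
and two applications of Cauchy--Schwarz, together with $\|\Phi x_\delta - b\|_2 \le \delta$ (feasibility of $x_\delta$) and $\|\Phi x^* - b\|_2 \le \delta$, bound this by $2\|\beta\|_2\delta$.

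No step looks genuinely difficult; the main thing to get right is the bookkeeping in the completing-the-square step so that the noise term $\delta$ and the model-bias term $\lambda\|\beta\|_2/2$ combine into the stated $(\delta + \lambda\|\beta\|_2/2)^2$, and so that the same inequality yields the residual bound $r \le \delta + \lambda\|\beta\|_2$ as a byproduct. The crucial structural input throughout is $\Psi y \in \mathrm{Im}(\Phi^T)$, which is the only way noise-level information enters the Bregman bound; this is why the hypothesis $\Psi y = \Phi^T\beta$ (as opposed to merely $y\in\partial\|\cdot\|_1(\Psi^T x^*)$) is essential.
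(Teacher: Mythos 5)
Your proof is correct, and since the paper does not prove this lemma itself (it is cited from Burger--Osher and Grasmair et al.), your argument is exactly the standard one behind those references: optimality against the feasible point $x^*$, the source condition $\Psi y=\Phi^T\beta$ to convert $\langle\Psi y,x-x^*\rangle$ into residual inner products, Cauchy--Schwarz, and completing the square, with nonnegativity of $d_y$ (from $y\in\partial\|\cdot\|_1(\Psi^Tx^*)$) justifying dropping each term in turn. Nothing further is needed.
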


From $\Psi y=\Phi^T \beta$ and the full-rankness of $\Phi$, we have $\beta=(\Phi \Phi^T)^{-1}\Phi\Psi y$.

\begin{proof}[Proof of Theorem \ref{thm2}]
Firstly, we derive that
\begin{subequations}
\begin{align}
\|\Psi(x_{\delta,\lambda}-x^*)\|_1 &\leq  C_3\|\Phi(x_{\delta,\lambda}-x^*)\|_2+C_4 d_y(x_{\delta,\lambda},x^*) \\
&\leq C_3\|\Phi x_{\delta,\lambda}-b\|_2+C_3\|\Phi x^* -b\|_2+ C_4 d_y(x_{\delta,\lambda},x^*) \\
&\leq C_3( \delta+\lambda\|\beta\|_2) +C_3\delta +C_4\frac{(\delta+\lambda\|\beta\|_2/2)^2}{\lambda}\label{robust},
\end{align}
\end{subequations}
where the first and the third inequalities follow from Lemmas \ref{lem2} and \ref{lem3}, respectively. Substituting $\lambda=C_0\delta$ and collecting like terms in (\ref{robust}), we obtain the first part of Theorem \ref{thm2}. The second part can be proved in the same way.
\qed\end{proof}

\section{Proof of Theorem 3}\label{sec:8}

Let $w:=x_\delta-x^*, v:=\Psi^Tw$. Then by the minimality of $\|\Psi^Tx_\delta\|_1$, we derive
\begin{subequations}
\begin{align*}
\|\Psi^Tx^*\|_1 &\geq \|\Psi^Tx_\delta\|_1=\|v+\Psi^Tx^*\|_1=\|v_I+\Psi^T_Ix^*\|_1+ \|v_J+\Psi^T_Jx^*\|_1\\
 &\geq \langle v_I+\Psi^T_Ix^*, \sign(\Psi^T_Ix^*)\rangle+ \|v_J\|_1-\|\Psi^T_Jx^*\|_1\\
 &= \langle v_I, \sign(\Psi^T_Ix^*)\rangle + \|\Psi^T_Ix^*\|_1+\|v_J\|_1-\|\Psi^T_Jx^*\|_1.
\end{align*}
\end{subequations}
Rearranging the above inequality and using the fact that $\|\Psi^Tx^*\|_1= \|\Psi^T_Ix^*\|_1+\|\Psi^T_Jx^*\|_1$ yield
\begin{equation}\label{boundvp}
\|v_J\|_1\leq 2\|\Psi^T_Jx^*\|_1+|\langle v_I, \sign(\Psi^T_Ix^*)\rangle |.
\end{equation}
In addition, we have
\begin{align*}
|\langle v_I, \sign(\Psi^T_Ix^*)\rangle | =|\langle y_I, v_I\rangle|\leq |\langle y, v\rangle| + |\langle y_J, v_J\rangle| .
\end{align*}
In what follows, we further bound the right-hand side of the above inequality term by term. First of all, in view of the optimization constraint and the feasible of both $x_\delta$ and $x^*$, we have
$$ \|\Phi w\|_2=\|\Phi(x_\delta-x^*)\|_2\leq \|\Phi x_\delta-b\|_2+\|\Phi x^*-b\|_2\leq 2\delta.$$
Using the condition $\Psi y=\Phi^T\beta$, we obtain
\begin{subequations}
\begin{align*}
|\langle y, v\rangle | & =|\langle  y, \Psi^Tw \rangle | =|\langle \Psi y,  w \rangle |=|\langle \Phi^T\beta, w \rangle | \\
 &= |\langle \beta, \Phi w \rangle |\leq \|\beta\|_2\cdot\|\Phi w\|_2\leq 2\delta\|\beta\|_2.
\end{align*}
\end{subequations}
At last, $|\langle y_J, v_J\rangle| \leq \|y_J\|_\infty\|v_J\|_1$. Collecting these upper bounds, we derive
$$|\langle v_I, \sign(\Psi^T_Ix^*)\rangle | \leq \|y_J\|_\infty\|v_J\|_1+2\|\beta\|_2\delta$$
Using \eqref{boundvp} and noting that $\|y_J\|_\infty<1$, we get
\begin{equation}\label{fboundvp}
\|v_J\|_1\leq \frac{2}{1-\|y_J\|_\infty}\|\Psi^T_Jx^*\|_1+ \frac{2\|\beta\|_2}{1-\|y_J\|_\infty}\delta.
\end{equation}
Now, using the inequality \eqref{addcond1}, we get
\begin{equation}\label{fboundvs}
\|v_I\|_2\leq \rho\|v_J\|_1+\tau\|\Phi w\|_2\leq  \frac{2\rho}{1-\|y_J\|_\infty}\|\Psi^T_Jx^*\|_1+ \left(\frac{2\rho\|\beta\|_2}{1-\|y_J\|_\infty}+2\tau\right)\delta.
\end{equation}
Finally, combing \eqref{fboundvp} and \eqref{fboundvs}, we obtain
\begin{subequations}
\begin{align*}
\|v\|_2 & \leq\|v_I\|_2+\|v_J\|_2\leq \|v_I\|_2+\|v_J\|_1 \\
 &\leq \frac{2(1+\rho)}{1-\|y_J\|_\infty}\|\Psi^T_Jx^*\|_1+\left(\frac{2(1+\rho)\|\beta\|_2}{1-\|y_J\|_\infty}+2\tau\right)\delta.
\end{align*}
\end{subequations}
This completes the proof.\qed

\section*{Acknowledgements}
The work of H. Zhang is supported by China Scholarship Council during his visit to Rice University, and in part by Graduate School of NUDT under Funding of Innovation B110202. The work of M. Yan is supported in part by the Center for Domain-Specific Computing (CDSC) funded by NSF grants CCF-0926127 and ARO/ARL MURI grant FA9550-10-1-0567. The work of W. Yin is supported in part by NSF grants DMS-0748839 and ECCS-1028790. They thank Rachel Ward and Xiaoya Zhang for their helpful discussions.

\bibliographystyle{unsrt}
\bibliography{analysisModel}

\end{document}